\documentclass[11pt]{article}
\usepackage[a4paper,margin=25mm]{geometry}
\usepackage[T1]{fontenc}
\usepackage{lmodern}
\usepackage{amsmath,amssymb,amsthm,booktabs,microtype}
\usepackage[hidelinks]{hyperref}
\hypersetup{pdftitle={MacKay--Neal Codes Achieve Capacity under MAP Decoding},pdfauthor={Kenta Kasai}}
\theoremstyle{definition}
\newtheorem{theorem}{Theorem}
\newtheorem{lemma}{Lemma}
\newtheorem{definition}{Definition}

\newcommand{\F}{\mathbb F_2}
\newcommand{\E}{\mathbb E}
\newcommand{\rank}{\operatorname{rank}}
\newcommand{\Ber}{\operatorname{Ber}}
\newcommand{\BSC}{\operatorname{BSC}}

\title{MacKay--Neal Codes Achieve Capacity\\under MAP Decoding}
\author{Kenta Kasai}
\date{October 7, 2026}
\begin{document}
\maketitle
\begin{abstract}
Statistical-mechanical analyses predict that MacKay--Neal codes can
achieve channel capacity at fixed degrees. We prove this prediction for
the uncoupled $(\ell,3,3)$ MN socket ensemble for every fixed
integer $\ell\ge4$. For each binary-input memoryless symmetric channel
of capacity strictly greater than $3/\ell$, the ensemble-average block
error probability under maximum a posteriori (MAP) decoding is
$O(\log N/N)$, where $N$ is the transmitted blocklength. The actual
transmitted rate converges to $3/\ell$. The result includes both
punctured and transmitted variables and does not condition the sparse
square matrix on invertibility. The proof establishes an entropy inequality for six bits
subject to even parity by an analytic argument that reduces the domain
by symmetry, restricts its interior stationary points, where both partial derivatives vanish, to the diagonal,
and controls the resulting scalar functions by explicit polynomial bounds. Exact configuration counts then bound the conditional
entropy on the binary symmetric channel. A standard comparison by mutual information
extends the entropy bound to general symmetric channels;
independent output erasures and a minimum-distance estimate for the transmitted code yield the bit
and block error bounds.
\end{abstract}

\section{Introduction}\label{sec:introduction}
Sparse graphical descriptions permit codes with dense effective
constraints while retaining a bounded number of edges per transmitted
bit. MacKay--Neal (MN) codes realize this possibility by puncturing
one class of variables, meaning that those variables are omitted from
transmission~\cite[Sec.~VI-B]{MacKay99}. A rigorous early capacity result uses degrees
that increase as the operating point approaches capacity
\cite{MacKay99}. Statistical-mechanical analyses subsequently
predicted capacity achievement at fixed small degrees
\cite{Murayama00,VanMourik02,TanakaSaad03}.
The question considered here is whether this prediction holds for the
independently matched, uncoupled $(\ell,3,3)$ ensemble, with the
degrees prescribed before the blocklength grows.

\subsection{Main result}
The ensemble has $N$ transmitted variables, $K=3N/\ell$ punctured
variables, and $N$ check nodes. Every check has three sockets of each
edge type; punctured and transmitted variables have socket degrees
$\ell$ and three, respectively. The two types are matched independently
and uniformly, following the multi-edge-type socket construction
\cite[Sec.~7.1]{RichardsonUrbanke08}. Section~\ref{sec:ensemble} gives the exact matrix
definition, including the treatment of parallel edges and singular
matrices. The pair consisting of the punctured and transmitted words
is uniform on the resulting extended linear code.

\begin{theorem}\label{thm:main}
Fix an integer $\ell\ge4$ and set $R=3/\ell$. Let $N$ tend to infinity
through values for which $K=RN$ is integral, and sample the independent
$(\ell,3,3)$ MN socket ensemble of
Definition~\ref{def:ensemble}. For every fixed binary-input memoryless
symmetric channel $W$ with capacity $C(W)>R$, the ensemble-average
conditional entropy of the full extended word is $O(\log N)$.
The ensemble-average transmitted-bit MAP error and extended-word
block-MAP error are both $O(\log N/N)$. The actual transmitted rate
converges to $R$ in $L^1$ and in probability. If $C(W)<R$, the limit
inferior of the ensemble-average transmitted-bit MAP error is at least
$h_2^{-1}(R-C(W))>0$, where $h_2^{-1}$ is the inverse binary entropy
on $[0,1/2]$. Constants in the upper bounds may depend on $W$ and
$\ell$.
\end{theorem}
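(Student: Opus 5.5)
The plan follows the route outlined in the abstract: a first-moment (Gallager-type) bound on the BSC, driven by a six-bit entropy inequality, then transfer to general symmetric channels, then conversion from entropy to error.

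\textbf{Step 1 (BSC entropy bound).} Write the extended word as $(s,x)$ with $s\in\F^K$ punctured and $x\in\F^N$ transmitted, related by $x=A^{-1}$-free constraints $Bs+Cx=0$ from the $N$ checks. Each check sees three punctured and three transmitted sockets, so it imposes even parity on six bits. By linearity and channel symmetry it suffices to condition on the all-zero word. I would bound the expected number of nonzero extended codewords whose transmitted part has weight $w$ and punctured part weight $u$ by an exact socket count. Matching independently and uniformly gives
\[
\E A_{u,w}=\binom{K}{u}\binom{N}{w}\frac{\mathrm{coef}\big[a^{3u}b^{3w}\big]\,g(a,b)^N}{\binom{3K}{3u}\binom{3N}{3w}},\qquad g(a,b)=\sum_{i+j\ \text{even}}\binom3i\binom3j a^ib^j .
\]
A saddle-point bound turns this into $\exp\{N\,\phi(\alpha,\beta)+O(\log N)\}$ with $\alpha=u/K$ and $\beta=w/N$. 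Combining with the BSC likelihood ratio $(p/(1-p))^{\ldots}$ through a union or Gallager bound on $H(s,x\mid y)$, I need the exponent to be negative away from the origin whenever $1-h_2(p)>R$. The needed inequality is $\phi(\alpha,\beta)\le h_2(\beta)-(1-R)\cdot(\text{something})$, i.e.\ the weight spectrum is dominated by the random-code spectrum. I will establish it as the six-bit entropy inequality: among distributions on six bits with even parity and prescribed marginals, the entropy is at most that of the product form. The proof reduces by symmetry to a two-parameter domain. Interior stationary points, where both partials vanish, are shown to lie on the diagonal, and the resulting one-variable functions are controlled by explicit polynomial bounds.

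\textbf{Step 2 (small weights and the $O(\log N)$ term).} Near the origin the exponent tends to $0$, so small $u,w$ require a separate count. With $\ell\ge4$ and check degree six, low-weight extended codewords arise only from short cycles or parallel edges. Their expected number is $O(1)$ in total, and each contributes $O(\log N)$ entropy at most, or $O(1/N)$ probability per configuration. Summing gives $\E H(s,x\mid y)=O(\log N)$ without conditioning on invertibility, since rank deficiency only enlarges the codeword set and is already included in the counts.

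\textbf{Step 3 (general channels, error bounds, rate).} I transfer to a general BMS channel $W$ with $C(W)>R$ by the standard comparison: choose a BSC$(p)$ with $1-h_2(p)\in(R,C(W))$. The extremality of the BSC among channels of equal capacity for conditional entropy of linear codes (or degradation through the mutual-information comparison) gives the same $O(\log N)$ bound. For bit errors, Fano-type reasoning with independent output erasures (the area-theorem or derivative argument) converts total entropy into an average bit-MAP error of $O(\log N/N)$. For block error, I would argue that the transmitted code has minimum distance linear in $N$ with probability $1-O(\log N/N)$, which follows from the Step 1 exponent being negative for small $\beta>0$ together with the Step 2 count. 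Given that, vanishing per-bit posterior uncertainty forces block-MAP correctness except on an event of probability $O(\log N/N)$.

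The rate statement follows because $\rank$ of the $N\times K$ punctured part is $K-O(1)$ in expectation by the same codeword count with $w=0$. The converse follows from Fano: $N(R-C)\lesssim\sum_i H(x_i\mid y)$, which gives a liminf bit error at least $h_2^{-1}(R-C)$.

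\textbf{Main obstacle.} I expect the main obstacle to be the six-bit entropy inequality in Step 1, specifically ruling out off-diagonal interior stationary points, for which I would first try the symmetry reduction and then explicit polynomial sign arguments. Step 2's uniform treatment of parallel edges is the secondary difficulty.
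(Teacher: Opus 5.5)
\textbf{There is a genuine gap in Step~1, which is where the difficulty lies.} Your outline follows the paper's architecture, but the six-bit inequality you propose to prove is the wrong one. ``Entropy at most that of the product form'' is just subadditivity, $\Phi(a,b)\le3h(a)+3h(b)$. Inserted into the candidate exponent $(R-3)h(\alpha)-3h(\beta)+J_q(\beta)+\Phi(\alpha,\beta)$, it leaves $Rh(\alpha)+J_q(\beta)>0$, so it bounds nothing.

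What is needed is Lemma~\ref{lem:universal}, $\Phi(a,b)\le2h(a)+2h(b)+h(a)h(b)/\ln2$, and its form is forced. The quantity $(R-3)h(a)-3h(b)+\min\{h(b),(1-R)\ln2\}+\Phi(a,b)$ is piecewise affine in $R$ and is maximized at $R=1-h_2(b)$. There it equals exactly the difference of the two sides, which is why one local bound covers every $\ell\ge4$. Your placeholder ``$h_2(\beta)-(1-R)\cdot(\text{something})$'' does not pin this down.

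Your gloss that the spectrum is dominated by the random-code spectrum is also false for $h_2(\beta)<1-R$. At $\alpha=0$ the ensemble exponent $-2h(\beta)+\Phi(0,\beta)$ tends to $0$ as $\beta\to0$, while the random-code exponent tends to $-(1-R)\ln2$.

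Your enumerator also uses the wrong punctured degree. There are $\ell K=3N$ type-one sockets, so the factor is $[x^{\ell u}]$ over $\binom{3N}{\ell u}$, not $3u$ over $\binom{3K}{3u}$.

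\textbf{The step from counts to an $O(\log N)$ entropy bound is unspecified.} A union--Bhattacharyya bound stops at the cutoff rate. A Gallager/Shulman--Feder bound could reach capacity, but only on weights with $h_2(\beta)\ge1-R$, so intermediate and small weights need separate arguments. Your estimate of $O(1)$ expected low-weight words is too weak for that. Going through $H\le1+P_e(K+N)$ requires block error $O(\log N/N)$. The true expectation is $O(1/N)$, by the socket-containment bound of Appendix~\ref{app:distance}, which uses that every variable degree is at least three.

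The paper avoids spectra here. It uses the noise-syndrome identity of Lemma~\ref{lem:channel-syndrome} and conditions on the noise weight, so the posterior is uniform on a fiber of size $L_G(e,w)$. Jensen then gives $\E\log_2L_G\le\log_2\E L_G$. Since $J_q(\beta)\le\min\{h(\beta),h(q)\}$, for $h_2(q)\le1-R$ every $(\alpha,\beta)$, including the origin, falls under Lemma~\ref{lem:envelope}. No small-weight case arises.

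\textbf{The rate argument is only half there.} Controlling $\E\dim\ker A$ gives only a lower bound on $\dim C(G)$. You are missing $\E\dim\ker[A\ B]\le K+o(N)$; the paper obtains it from the entropy bound on a BSC of capacity $R+\eta$. The extended-word block claim also needs $\Pr\{\ker A\ne0\}=O(\log N/N)$, which $\E\dim\ker A=O(1)$ does not give.
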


The theorem makes the fixed-degree prediction rigorous for the
specified ensemble. Its block error assertion concerns the full word,
including punctured variables. The use of a uniform extended-code
prior allows every configuration to be included, without assuming
that the sparse square matrix is invertible. The conclusion holds for each fixed channel and concerns MAP decoding. The equality
case $C(W)=R$ is not covered.

\subsection{Relation to previous results}
The early MN construction uses a sparse source produced, for example,
by reversing arithmetic coding, and treats the source and noise vectors
together in the decoding problem \cite[Sec.~VI-A]{MacKay99}.
Its capacity argument applies the typical-set bounds of Section~II,
with the degree chosen according to the gap to capacity
\cite[Thm.~2 and Sec.~VI-B]{MacKay99}. The present ensemble instead
uses the uniform extended-code prior and prescribed fixed degrees.
Fixed-degree predictions on the
binary symmetric channel (BSC) were subsequently obtained under the assumption of replica symmetry
\cite[Sec.~IV-A]{Murayama00}. The parameters $(K,C,L)$ in that
analysis correspond to $(3,\ell,3)$ here. A later magnetization
enumerator argument uses the Nishimori identity~\cite{Nishimori81},
in its interpretation as inference using the actual prior distribution
and channel law~\cite[Sec.~5.2]{Nishimori01},
together with an additional assumption that the enumerator is independent of the true
noise rate~\cite[Sec.~V, Eqs.~(17)--(19)]{VanMourik02}. An extension
to general symmetric channels obtains the capacity threshold when
the two standard solutions exhaust the locally stable solutions
\cite{TanakaSaad03}.
Theorem~\ref{thm:main} establishes the corresponding capacity
prediction for the specified ensemble by direct finite-length bounds.

Turning these predictions into a rigorous capacity theorem requires
controlling all competing candidates consistent with the received word,
without relying on the assumed structure of the solutions. The Nishimori
condition means that the prior distribution and channel law used for
inference match the actual model, and supplies identities underlying the
earlier analyses \cite{Nishimori81}\cite[Sec.~5.2]{Nishimori01}.
This condition alone does not establish the replica-symmetric ansatz or
show that the solutions considered exhaust those needed for the analysis
\cite{Murayama00,VanMourik02,TanakaSaad03}.

Rigorous capacity achievement at bounded graphical complexity on
general binary-input memoryless symmetric (BMS) channels is already known for Hsu--Anastasopoulos (HA)
codes~\cite{HsuAnastasopoulos10}. The published construction
concatenates an outer Gallager $(j,k)$ low-density parity-check (LDPC) code
with a $(k,k)$ low-density generator-matrix (LDGM) code of rate one, for $j\ge4$ and sufficiently large
even $k$~\cite[Sec.~III-C and Thm.~3]{HsuAnastasopoulos10}.
The sufficient degree bound is uniform over design rates below a
fixed rate upper bound~\cite[Thm.~2 and p.~999]{HsuAnastasopoulos10}.
Consequently, bounded degree and capacity achievement under block
maximum-likelihood (ML) decoding on general BMS channels are established
features of HA codes. The dual HA
description of the present MN family instead has an outer
$(3,\ell)$ LDPC component and an inner $(3,3)$ LDGM component
\cite[Sec.~3.2]{kasaiHA}; these prescribed degrees fall outside
the above hypotheses. The same reference describes the MN capacity
statement as a conjecture supported by replica calculations
\cite[Sec.~3.1]{kasaiHA}.

The role of duality also depends on the channel and the error
criterion. On the binary erasure channel (BEC), let $e_C(\epsilon)$
denote the average extrinsic bit-MAP erasure probability of a length-$n$ binary
linear code $C$. The duality identity
$e_C(\epsilon)+e_{C^\perp}(1-\epsilon)=1$
relates complementary erasure probabilities
\cite[Thm.~4]{Renes18}.
Together with the area identity
$\int_0^1e_C(\epsilon)\,d\epsilon=\dim C/n$
\cite[Thm.~1]{AshikhminKramerTenBrink04},
\cite[Sec.~III]{MeassonMontanariRichardsonUrbanke09},
this transfers
bit-MAP capacity achievement to the dual: if a sequence has rates
converging to $r\in(0,1)$ and vanishing bit-MAP error at every erasure
probability below $1-r$, its dual has vanishing bit-MAP error at
every erasure probability below $r$. Thus duality does provide a
capacity-transfer argument for this channel and error criterion.
For a uniform linear-code input, the ordinary bit-MAP error on the
BEC is $\epsilon e_C(\epsilon)/2$: a bit is ambiguous only if its own
observation is erased, and an ambiguous bit has posterior probabilities
$1/2,1/2$. For such a sequence $C_n$, vanishing bit error below
$1-r$ therefore gives $e_{C_n}(\epsilon)\to0$ there.
The area identity and bounded convergence~\cite[Thm.~1.6.7]{Durrett19} imply
$\int_{1-r}^1[1-e_{C_n}(\epsilon)]\,d\epsilon\to0$.
Since $e_{C_n}$ is nondecreasing, this forces
$e_{C_n}(\epsilon)\to1$ at each $\epsilon>1-r$.
Applying the duality identity at $1-\epsilon>1-r$ proves the stated
bit-MAP conclusion for $C_n^\perp$ at $0<\epsilon<r$.

For general BMS channels, the known channel-duality relation involves
a different output model. In particular, the dual of a BSC has
nonorthogonal pure-state quantum outputs
\cite[Sec.~III-B]{Renes18}.
The classical BMS theorem for HA codes therefore cannot be applied to
that dual channel to obtain the present MN theorem directly.

Block-ML capacity achievement is not preserved by code duality in
general, even on the BEC. To see this, let $D_n$ be a capacity-achieving
sequence of binary linear codes with limiting rate in $(0,1)$ and
append a fixed-zero coordinate,
giving $C_n=\{0\}\times D_n$. This preserves its asymptotic rate and
block error probability. However,
$C_n^\perp=\F\times D_n^\perp$ contains an independent uncoded bit.
Its block-ML error probability is at least $p$ on a BSC of crossover
probability $0<p<1/2$, and at least $\epsilon/2$ on a BEC of erasure
probability $\epsilon>0$. This example concerns block error and is
consistent with the BEC bit-MAP transfer above.

These distinctions explain why the prescribed low-degree family and
the block-error conclusion require additional analysis. Moreover,
MN--HA code duality concerns the transmitted code $C(G)$. Recovering
its punctured variables as well requires the injectivity of $A$,
which is established with high probability in
Lemma~\ref{lem:projected-distance}. The contribution of
Theorem~\ref{thm:main} is the rigorous result for the specified
uncoupled degree family, including extended-word block decoding.

Compound LDGM--LDPC constructions also give source- and channel-coding
optimality and applications to binning~\cite{WainwrightMartinian09}.
The BSC theorem in the published article proves the existence of suitable
finite degrees for each rate--noise pair. Its inner LDGM ensemble
chooses neighbors independently for each output node and has output
degree at least four~\cite[Secs.~III-A--III-B and Thm.~2]{WainwrightMartinian09}.
It therefore does not establish the prescribed doubly regular
degree-three case treated here.

For spatially coupled $(\ell,3,3)$ MN ensembles, the capacity threshold
on the BEC was established using positivity of the density-evolution
potential and threshold saturation~\cite{OkazakiKasai14}.
That result concerns the same degree family, with sufficiently large
coupling width and length, and characterizes iterative decoding through
density evolution. A recent theorem establishes capacity achievement
under the sum-product algorithm (SPA) on BMS
channels~\cite[Thm.~1.1]{Kasai26SC}. The present result bounds
finite-length MAP error probabilities for the independent uncoupled
ensemble. A known
coupled-to-uncoupled entropy interpolation theorem for LDPC ensembles
\cite[Sec.~III-A and Thm.~2]{GiurgiuMacrisUrbanke16} uses a
single edge type. It does not directly apply to checks with three
sockets of each of two separately constrained types. Our proof
controls the uncoupled ensemble through its exact enumerator.

Counting MN candidates requires tracking both punctured and transmitted
variables. Let $a,b$ denote the respective mean fractions of sockets
whose values are one at a check. The exponent in the candidate count
depends on both variables. Exact configuration counts are available
from typical-set analysis \cite[Sec.~VI-B]{MacKay99} and multi-edge-type
LDPC weight enumeration \cite{kasaiMET}; exact socket enumeration
has also been used to study HA/MN distance \cite{Kasai26GV}.
The additional requirement here is to bound this exponent by zero for
every admissible $a,b$ under the capacity condition. Obtaining an exact
counting formula therefore leaves an inequality to be proved before
capacity achievement at the prescribed small degrees follows.

We reduce this requirement to an entropy inequality for a single check.
In the $(\ell,3,3)$ ensemble, each check has three sockets from each
variable class, giving an even-parity constraint on six bits.
Lemma~\ref{lem:universal} bounds their maximum entropy $\Phi(a,b)$ at
the prescribed mean fractions $a,b$ in a form compatible with the
capacity condition. Combined with the configuration count, it bounds
the ensemble-average number of candidates by a polynomial in the
blocklength $N$ for the appropriate range of noise weights.
Changing $\ell$ leaves the check structure unchanged, so the same local
inequality applies to every fixed $\ell\ge4$ in the stated family.
The passage from BSC conditional entropy to general BMS conditional
entropy then uses the established more-capable order, which compares
mutual information for every input distribution \cite{KornerMarton77}:
among equal-capacity BMS channels, the BSC has the smallest mutual
information for every input distribution
\cite[App., Lemma~7.1 and Prop.~7.1]{Sasoglu11}.

Proving the local inequality requires treating unequal mean fractions.
Symmetry under exchanging $a,b$ alone does not imply that a maximum
lies on the diagonal $a=b$. Appendix~\ref{app:parity} studies the
difference between the two sides of the inequality, reduces its domain
using symmetry and boundary estimates, and proves that all interior
stationary points in the reduced domain lie on the diagonal.
Explicit polynomial and power-series bounds then settle the remaining
one-variable problem. Entropy--log-partition duality and the covariance
identities for finite exponential families
\cite[Secs.~3.5--3.6]{WainwrightJordan08},
coefficient-ratio monotonicity for power-series quotients
\cite[Lemma~1]{BiernackiKrzyz55}, and range bounds from Bernstein
coefficients \cite[Sec.~3]{CargoShisha66} are established tools.
The additional parts here are the stationary-point comparison and the
proof of the inequality for the six-bit even-parity constraint.

An entropy bound obtained from candidate counts still needs to be
converted into a decoding-error bound. Small uncertainty about the full
word alone does not imply small bit error: a posterior uniform on the
all-zero and all-one words has entropy one bit but bit-MAP error $1/2$
at every coordinate. We use the capacity gap to erase channel outputs
independently, and bound bit-MAP error through the entropy decrease when
the erased observations are restored in
Lemma~\ref{lem:channel-erasure-slack}. This uses the coordinate-output
comparison underlying the generalized area theorem
\cite[Sec.~II and Thm.~1]{MeassonMontanariRichardsonUrbanke09}.
Lemma~\ref{lem:projected-distance} establishes a minimum-distance bound
for the transmitted code and uniqueness of the punctured word;
Lemma~\ref{lem:block-upgrade} then yields the block-MAP error bound for
the full extended word.

\subsection{Proof outline}
Let $N$ be the transmitted blocklength, let $R=3/\ell$ be the design
rate, and let $C(W)$ be the capacity of the channel $W$. With the
degree $\ell\ge4$ fixed, we aim to show that the ensemble-average
block-MAP error tends to zero as $N$ grows whenever $C(W)>R$.
The proof first counts competing candidates, then bounds conditional
entropy, transmitted-bit error, and extended-word block error in turn.

The candidate count starts from a single parity check. Each check has
three sockets on the punctured-variable side and three on the
transmitted-variable side. Section~\ref{sec:parity} bounds the entropy
of the configurations satisfying this even-parity constraint in terms
of the fractions of ones on the two sides. Varying $\ell$ leaves the
three sockets of each type unchanged, so the same local inequality
serves every degree in the stated family. Appendix~\ref{app:parity}
gives its analytic proof; the coefficient-ratio and Bernstein
estimates cited above are used there to control its maximum.

We first consider the BSC. Write $U$ for the punctured word, $V$ for
the transmitted word, $Y$ for the received word, and $G$ for the
sampled graph. Our target is the remaining uncertainty about the full
word, $H(U,V\mid Y,G)$. The noise-syndrome identity in
Lemma~\ref{lem:channel-syndrome} expresses this uncertainty using
noise and auxiliary-vector pairs with the same syndrome, without
assuming either sparse matrix to be invertible. Within each fixed
Hamming-weight class, the BSC noise is uniform. After also fixing
the syndrome, the corresponding pairs are equiprobable, and their
number determines the conditional entropy through its logarithm.

Combining exact configuration counts~\cite{kasaiMET} with
the local inequality bounds the ensemble-average candidate count
by a polynomial in $N$ for noise weights satisfying the capacity
condition. Its logarithm is $O(\log N)$. Jensen's
inequality~\cite[Thm.~2.6.2]{CoverThomas06} and a bound on the
exponentially rare remaining noise weights~\cite{Hoeffding63} then
give $\E_G H(U,V\mid Y,G)=O(\log N)$ in
Lemma~\ref{lem:bsc-joint-entropy}. The known mutual-information
comparison with an equal-capacity BSC~\cite[Lemma~7.1]{Sasoglu11}
uses the entropy convexity underlying Mrs.\ Gerber's lemma~\cite{WynerZiv73}.
Successive output replacements~\cite[Sec.~I]{ErezOrdentlichShamai24}
extend this comparison to correlated code coordinates. It therefore
transfers this entropy bound to any BMS channel with $C(W)>R$.

Converting this bound to bit error needs an additional argument.
A posterior uniform on the all-zero and all-one words has entropy
one bit, yet every coordinate has MAP error $1/2$. We use the
positive capacity gap $C(W)-R$ to obtain more information: let
$W_\varepsilon$ replace each channel output independently by an
identifiable erasure with probability $\varepsilon>0$, chosen
independently of $N$ so that $C(W_\varepsilon)>R$
still holds. The same $O(\log N)$ entropy bound applies even after
these erasures.

Restore the erased observations one at a time. If the average bit-MAP
error on the original channel remained bounded away from zero,
these observations would decrease the conditional entropy, on
average, by an amount proportional to $N$. After averaging over $G$,
this decrease is bounded by the $O(\log N)$ uncertainty before restoration.
Lemma~\ref{lem:channel-erasure-slack} makes this comparison quantitative,
using the entropy chain rule and mutual-information concavity
\cite[Sec.~2.5 and Thm.~2.7.4]{CoverThomas06}, as in the
coordinate-output comparisons underlying the generalized area
theorem~\cite[Sec.~II and Thm.~1]{MeassonMontanariRichardsonUrbanke09}.
It yields the transmitted-bit bound
$\overline P_{\rm bit}^{\rm MAP}(W)=O(\log N/N)$, where the overbar
denotes averaging over the sampled graph.

To pass to block error, Section~\ref{sec:distance} uses the minimum
distance $d_{\min}$ of the transmitted code. Lemma~\ref{lem:projected-distance}
shows that, outside a set of graphs of probability $O(1/N)$,
$d_{\min}\ge\delta_vN$ for a constant $\delta_v>0$ depending only on
the fixed degree. Select the codeword nearest in Hamming distance to
the vector of bit-MAP decisions. An incorrect selection requires at
least $d_{\min}/2$ wrong bit decisions, so the bit-error bound and
Markov's inequality~\cite[Thm.~1.6.4]{Durrett19} control its block
error without independence assumptions on the decisions. Block-MAP
has no larger error than this decoder. On the same graphs, $A$ is
injective, so $V$ uniquely determines $U$. Lemma~\ref{lem:block-upgrade}
therefore gives $\overline P_{\rm block}^{\rm MAP}(W)=O(\log N/N)$
for the full pair $(U,V)$.

The actual transmitted rate also needs to be established. The entropy
bounds imply that $r_N(G)=\dim C(G)/N$ converges to $R$ in $L^1$,
without a prior rank assumption on the defining matrices. If
$C(W)<R$, the binary Fano inequality~\cite[Sec.~2.10]{CoverThomas06}
gives a strictly positive asymptotic lower bound on the average
bit-MAP error. Together, the two sides identify the capacity threshold
for the actual rate; the equality case $C(W)=R$ remains outside the
statement.

\section{Ensemble and notation}\label{sec:ensemble}
All matrices and vector spaces are over $\F$. Write $|v|$ for Hamming
weight and $d_H$ for Hamming distance
\cite[Sec.~1.4]{RichardsonUrbanke08}. Shannon entropies and channel
capacities use the standard definitions~\cite[Chs.~2 and 7]{CoverThomas06}
and are measured in bits. We use the BMS-channel convention of
\cite[Sec.~4.1]{RichardsonUrbanke08}: channel uses are independent
conditional on the inputs, and exchanging the two input symbols
corresponds to a symmetry of the output law. Its capacity is attained
by the uniform binary input. The natural-log binary entropy is
$h(a)=-a\ln a-(1-a)\ln(1-a)$, with $0\ln0=0$; thus
$h_2(a)=h(a)/\ln2$.

\begin{definition}[MN socket ensemble with two edge types]
\label{def:ensemble}
Fix $\ell\ge4$, $R=3/\ell$, and integers $N,K$ with $K=RN$.
We use the socket-based multi-edge-type construction
\cite[Sec.~7.1]{RichardsonUrbanke08}.
A socket is an edge endpoint assigned to a node before matching.
An edge type specifies corresponding socket classes on the variable
and check sides; only sockets of the same type may be matched.
The uniform matchings for distinct edge types are independent.
The edge-degree vectors are $(\ell,0)$ for punctured variables,
$(0,3)$ for transmitted variables, and $(3,3)$ for every check.
For edge type 1, uniformly match the $\ell K=3N$ sockets
of $K$ variable nodes to three sockets at each of $N$ check nodes.
For edge type 2, independently and uniformly match three sockets
at each of $N$ variable nodes to three more sockets at each check.
Let $A\in\F^{N\times K}$ and $B\in\F^{N\times N}$ be the respective
incidence matrices, with parallel edges added modulo two, and write
$G=(A,B)$. Define the extended and transmitted codes by
\begin{equation}\label{eq:code-definition}
 \mathcal C_{\rm ext}(G)=\ker[A\ B],\qquad
 C(G)=\{v\in\F^N:\text{there exists }u\in\F^K\text{ with }Au+Bv=0\}.
\end{equation}
The pair $(U,V)$ is uniform on $\mathcal C_{\rm ext}(G)$, and $V$ is
sent through $W^{\otimes N}$. The decoder knows $G$.
\end{definition}

For each $v\in C(G)$, the set of corresponding punctured words is a coset
of $\ker A$ and has the same cardinality, so $V$ is uniform on $C(G)$. If $B$ is invertible, the
constraint is the usual MN encoding relation $v=B^{-1}Au$
\cite[Sec.~3.1]{kasaiHA}.
Definition~\ref{def:ensemble} also specifies the prior and transmitted
code when $B$ is singular. In that case, $U$ is uniform on the
feasible subspace $\{u:Au\in\operatorname{im}B\}$, which can be a
proper subspace of $\F^K$: each feasible $u$ has exactly
$2^{\dim\ker B}$ compatible transmitted words. Thus $K$ counts
punctured coordinates, whereas $\dim C(G)$ is the number of
independent information bits in the transmitted code. We prove
below that $\dim C(G)/N$ converges to $K/N$.
Socket degrees refer to the configuration
before reduction modulo two; the weights of the binary matrix rows
or columns can be smaller.

\begin{definition}[Error probabilities and actual rate]
Following the bit and block decision criteria for MAP decoding
\cite[Secs.~1.5 and 4.1.6]{RichardsonUrbanke08}, for fixed $G$, let
\begin{align}
 P_{\rm bit}^{\rm MAP}(G,W)
 &=\frac1N\sum_{i=1}^N
   \min_{\widehat v_i}\Pr\{\widehat v_i(Y,G)\ne V_i\mid G\},
   \label{eq:bit-definition}\\
 P_{\rm block}^{\rm MAP}(G,W)
 &=\min_{\widehat u,\widehat v}
   \Pr\{(\widehat u(Y,G),\widehat v(Y,G))\ne(U,V)\mid G\}.
   \label{eq:block-definition}
\end{align}
An overbar denotes expectation over $G$. The actual transmitted rate
is $r_N(G)=\dim C(G)/N$, the standard code rate applied to the
transmitted projection~\cite[Secs.~1.4 and 1.7]{RichardsonUrbanke08}.
\end{definition}

Conditional entropies at a fixed graph are averaged over graphs only
when an explicit $\E_G$ is written. MAP means maximum a posteriori
decoding under the specified uniform prior and the matched channel
law. Thus block-MAP decoding is also maximum-likelihood decoding of
the extended code, with a fixed convention for ties. The corresponding
success event implies correct recovery of the transmitted word.

\section{An entropy bound for an even-parity check}\label{sec:parity}
A socket is active when the corresponding coordinate of the candidate
vector being counted has value one. The local check polynomial records how many sockets of each type
are active while the total parity is even. Its entropy bound will
control the exact configuration enumerator, in the multi-edge type
counting framework of~\cite{kasaiMET}.
Define
\begin{align}
 F(x,y)
 &=\frac{(1+x)^3(1+y)^3+(1-x)^3(1-y)^3}{2}\notag\\
 &=(1+3x^2)(1+3y^2)+(3x+x^3)(3y+y^3),
 \label{eq:check-polynomial}\\
 \Phi(a,b)
 &=\inf_{x,y>0}\{\ln F(x,y)-3a\ln x-3b\ln y\}.
 \label{eq:phi-definition}
\end{align}
The infimum is understood in the extended real sense, with $\exp(-\infty)=0$. In particular, $\Phi(a,b)=-\infty$ for $|a-b|>2/3$.
For feasible means, $\Phi(a,b)$ is the maximum entropy, in natural-log
units, of an even-parity six-bit vector with marginal means $a$ on
the first three coordinates and $b$ on the last three, by
entropy--log-partition duality
\cite[Sec.~3.6 and Thm.~3.4]{WainwrightJordan08}. The proof
uses the variational expression~\eqref{eq:phi-definition} directly;
infeasible pairs cause no difficulty for the upper bounds.

Each even-parity six-bit vector contributes one monomial to $F$:
the exponents count its ones in the two triples, and the coefficient
counts the vectors with those counts. Thus $F^N$ counts admissible
socket profiles over all $N$ checks. The variational bound
\eqref{eq:phi-definition} controls these coefficients because, for
positive $x,y$, any one coefficient times its monomial is at most
$F(x,y)^N$. Taking the infimum of that upper bound does not require it to be attained. The difficulty in Lemma~\ref{lem:universal}
is to obtain a bound uniform in the two different marginal means;
the separate constraints on the two edge types must both be retained.

\begin{lemma}\label{lem:universal}
For all $a,b\in[0,1]$,
\begin{equation}\label{eq:universal}
 \Phi(a,b)\le 2h(a)+2h(b)+\frac{h(a)h(b)}{\ln2}.
\end{equation}
\end{lemma}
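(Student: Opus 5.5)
We prove \eqref{eq:universal} through the difference
\[
 D(a,b)=2h(a)+2h(b)+\frac{h(a)h(b)}{\ln 2}-\Phi(a,b),
\]
showing $D\ge0$ on $[0,1]^2$. Off the feasible strip $|a-b|\le2/3$ we have $\Phi=-\infty$, so there is nothing to prove. On the strip, $\Phi$ is concave: it is the maximum entropy of a linear family, equivalently the Legendre transform of the convex function $(s,t)\mapsto\ln F(e^s,e^t)$ evaluated at $(3a,3b)$. The domain reduces by symmetry. $F(x,y)=F(y,x)$ gives $\Phi(a,b)=\Phi(b,a)$. Complementing all six bits preserves even parity, so $\Phi(a,b)=\Phi(1-a,1-b)$. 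Complementing one triple only flips parity, but the odd-parity polynomial $[(1+x)^3(1+y)^3-(1-x)^3(1-y)^3]/2$ is dominated by $F$ only in one direction, so we use just the first two symmetries, together with the invariance of $h$ under $a\mapsto1-a$. This restricts attention to $a\le b$ and $a+b\le1$.

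On the boundary of the reduced domain we check the bound directly. At $a=0$ the first triple is identically zero, so $\Phi(0,b)$ is the maximum entropy of an even-weight three-bit vector with mean $b$, which is at most $3h(b)\cdot$(something) and is easy to bound by $2h(b)$ using the explicit two-point support $\{000\}\cup\{\text{weight }2\}$. Near the boundary of the feasibility strip the bound also holds with room to spare. In the interior we use the dual representation. At the optimizer $(x,y)$ the stationarity conditions read
\[
 3a = x\,\partial_x \ln F,\qquad 3b = y\,\partial_y \ln F,
\]
and the envelope theorem gives $\partial_a\Phi=-3\ln x$ and $\partial_b\Phi=-3\ln y$. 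An interior stationary point of $D$ therefore satisfies
\[
 -3\ln x=h'(a)\Bigl(2+\frac{h(b)}{\ln2}\Bigr),\qquad -3\ln y=h'(b)\Bigl(2+\frac{h(a)}{\ln2}\Bigr).
\]
We combine these with the moment equations. We then show that the map $(a,b)\mapsto$ (the difference of the two equations) is sign-definite off the diagonal, using monotonicity of the coefficient ratios of the power series $x\partial_x F/F$ in the style of the Biernacki--Krzy\.z lemma. This forces $a=b$.

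On the diagonal the problem is one-dimensional: show $4h(a)+h(a)^2/\ln2\ge\Phi(a,a)$, where $\Phi(a,a)=\inf_{x>0}\{\ln F(x,x)-6a\ln x\}$ by symmetry and convexity. We parametrize by $x\in(0,1]$, with $a$ determined by the moment equation, to obtain a scalar inequality in $x$. We then bound the logarithmic terms by truncated power series with certified remainders and clear denominators to get a polynomial inequality on $[0,1]$, which we verify through nonnegative Bernstein coefficients, subdividing the interval if needed.

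The main obstacle is the stationary-point localization. The bound is tight at $a=b=1/2$, where both sides equal $\ln 64$ with slack only from the cross term, and the cross term $h(a)h(b)/\ln2$ couples the variables nonlinearly. I would first try to prove directly that the difference of the two stationarity equations, viewed as a function of $(x,y)$, has the sign of $x-y$. If that fails, the fallback is a monotone-in-direction argument along lines $a+b=\text{const}$.
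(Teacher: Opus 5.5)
There is a genuine gap, at the step you flag as the main obstacle. On your reduced domain the diagonal localization is not merely hard; it is false. The bound \eqref{eq:universal} is tight not only at $(1/2,1/2)$ but along the entire lines $a=1/2$ and $b=1/2$.

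To see this, put $y=1$ in the variational formula. This gives $\Phi(a,1/2)\le 2\ln2+3h(a)$. Equality is attained by taking the first triple i.i.d.\ $\Ber(a)$ and the second triple uniform over the four vectors of the required parity. The right side of \eqref{eq:universal} at $b=1/2$ also equals $3h(a)+2\ln2$, so your $D$ vanishes identically on $\{(a,1/2):0\le a\le1/2\}$.

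For $0<a<1/2$ this segment lies in the interior of your triangle $\{a\le b,\ a+b\le1\}$. Since $D\ge0$, every point of it is a global minimizer, hence an interior stationary point off the diagonal. Concretely, $x=a/(1-a)$, $y=1$ satisfies both of your stationarity equations there. So no sign-definiteness argument can force $a=b$ on that domain. Your fallback along lines $a+b=c$ fails for the same reason: for $1/2<c<1$ the minimum of $D$ on that line is $0$, attained at $(c-1/2,1/2)$, whereas $D(c/2,c/2)>0$.

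What is missing is a separate treatment of the quadrants where $(a-\tfrac12)(b-\tfrac12)\le0$, which single-triple complementation cannot supply. The paper substitutes the product-measure values $x=a/(1-a)$, $y=b/(1-b)$. With $q_a=1-2a$ and $q_b=1-2b$, this gives $F(x,y)=(1+q_a^3q_b^3)/[2(1-a)^3(1-b)^3]$, and hence
\[
 \Phi(a,b)-2h(a)-2h(b)-\frac{h(a)h(b)}{\ln2}
 \le-\frac{(\ln2-h(a))(\ln2-h(b))}{\ln2}+\ln\bigl(1+q_a^3q_b^3\bigr).
\]
The right side is nonpositive whenever $q_aq_b\le0$. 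Together with complementation, this reduces everything to $[0,1/2]^2$. There the tight lines become boundary edges, checked directly along with $a=0$ and $b=0$.

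In the open square the optimizer has $\rho=(1-x)/(1+x)>0$ and $\sigma=(1-y)/(1+y)>0$, and this positivity is what makes the localization work. The paper writes $\operatorname{atanh}(1-2a)=\operatorname{atanh}\rho+\operatorname{atanh}(\rho^2\sigma^3)$, and symmetrically for $b$. It then compares the ratio of the two stationarity equations using strict monotonicity of $\delta(u)/(u\operatorname{atanh}u)$, where $\delta(u)=1-h((1-u)/2)/\ln2$, obtained from coefficient ratios.

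Your diagonal plan is reasonable in outline. Note, however, that near $a=0$ the slack is only of order $a\ln(1/a)$, so a single polynomial certificate in $x$ on $(0,1]$ needs care with that logarithmic term. The paper splits at $a=1/6$, using concavity on the left piece and the product-measure bound on the right.
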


The analytic proof is given in Appendix~\ref{app:parity}. Symmetry and
boundary estimates reduce the problem to a compact square. An
entropy-derivative comparison uses coefficient-ratio monotonicity
\cite[Lemma~1]{BiernackiKrzyz55} to force every interior stationary
point to lie on the diagonal. On the diagonal, a specific substitution and a
second-derivative bound, using positive Bernstein coefficients
\cite[Sec.~3]{CargoShisha66}, handle small marginal means; a power-series
bound handles the remaining interval. The polynomial identities and
rational bounds used in this argument are displayed explicitly.

\begin{lemma}\label{lem:envelope}
For $0\le R\le1$ and $a,b\in[0,1]$,
\begin{equation}\label{eq:rate-envelope}
 (R-3)h(a)-3h(b)+\min\{h(b),(1-R)\ln2\}+\Phi(a,b)\le0.
\end{equation}
\end{lemma}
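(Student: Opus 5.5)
The plan is to deduce the envelope inequality~\eqref{eq:rate-envelope} directly from Lemma~\ref{lem:universal}, by elementary algebra followed by a two-case split according to which term attains the minimum. First dispose of the infeasible pairs. If $\Phi(a,b)=-\infty$, for instance when $|a-b|>2/3$, the left-hand side is $-\infty$ and there is nothing to prove. Otherwise substitute the bound~\eqref{eq:universal} for $\Phi(a,b)$. The terms $2h(a)$ and $2h(b)$ partially cancel $(R-3)h(a)-3h(b)$, and the left-hand side of~\eqref{eq:rate-envelope} is at most
\[
 E(a,b)=-(1-R)h(a)-h(b)+\min\{h(b),(1-R)\ln2\}+\frac{h(a)h(b)}{\ln2}.
\]

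Next split into cases. In the first case, $h(b)\le(1-R)\ln2$, the minimum equals $h(b)$ and cancels the term $-h(b)$. This leaves
\[
 E=h(a)\Bigl(\frac{h(b)}{\ln2}-(1-R)\Bigr),
\]
which is $\le0$ because $h(a)\ge0$ and the bracket is nonpositive by the case hypothesis. In the second case, $h(b)\ge(1-R)\ln2$, the minimum equals $(1-R)\ln2$, and the expression factors as
\[
 E=\Bigl(\frac{h(b)}{\ln2}-(1-R)\Bigr)\bigl(h(a)-\ln2\bigr).
\]
Here the first factor is nonnegative by the case hypothesis, and the second is nonpositive since $h(a)\le\ln2$ on $[0,1]$. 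Hence $E\le0$ in both cases, which gives~\eqref{eq:rate-envelope} for all $0\le R\le1$ and $a,b\in[0,1]$.

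There is no real obstacle at this stage. All the analytic difficulty is contained in Lemma~\ref{lem:universal}, which we take as given. The only points requiring care are verifying the factorization in the second case by expanding the product, and noting that the hypothesis $R\le1$ is what makes $(1-R)\ln2\ge0$, so that the case split covers every value of $h(b)\in[0,\ln2]$.
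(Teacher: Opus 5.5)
Your proof is correct and is essentially the paper's argument. Your two cases are exactly the two affine pieces of the left side viewed as a function of $R$, split at $R=1-h_2(b)$. Your factorizations simply record that, after substituting Lemma~\ref{lem:universal}, each piece vanishes at that breakpoint; there the paper evaluates the maximum and obtains the difference of the two sides of~\eqref{eq:universal}.

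One small correction to your closing remark: the case split is exhaustive for every real $R$, and neither factorization uses $0\le R\le1$.
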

\begin{proof}
If $\Phi(a,b)=-\infty$, the claim is immediate. Otherwise, for fixed
$a,b$, the left side is piecewise affine in $R$. Its slope
is $h(a)$ for $R\le1-h_2(b)$ and $h(a)-\ln2$ for
$R\ge1-h_2(b)$. Its maximum on $[0,1]$ is attained at
$R=1-h_2(b)$, where it becomes the difference between the two sides
of~\eqref{eq:universal}. Apply Lemma~\ref{lem:universal}.
\end{proof}

\section{Conditional entropy and decoding on BMS channels}
\label{sec:channel}

We first prove a finite-length entropy bound on the BSC, using
noise vectors classified by their weight and the configuration enumerator. A known
more-capable comparison then extends the bound to every BMS channel of
the same capacity. The conversion from joint entropy to bit-MAP error
uses an independent erasure perturbation. Throughout this section,
Shannon entropies and capacities are in bits, whereas $h$ and $\Phi$
use natural logarithms. The ensemble, the variables $(U,V)$, and the
notation $G=(A,B)$ are those of Section~\ref{sec:ensemble}.

The statistical-physics interpretation uses Bayesian inference with
the actual prior distribution and channel law: a valid candidate has weight equal to its channel likelihood,
and its energy is minus the natural logarithm of that likelihood.
Summing these weights gives a partition function, while the posterior
is the corresponding Gibbs distribution~\cite[Sec.~5.2]{Nishimori01}.
Matching the prior and the channel law is the coding-theoretic form
of the Nishimori condition~\cite{Nishimori81}, used in the earlier MN
analyses~\cite[Sec.~IV]{Murayama00}, \cite[Sec.~V]{VanMourik02}. Here we control the posterior uncertainty
by conditioning BSC noise on its weight. After also fixing the
syndrome, all admissible auxiliary-vector and noise pairs have the
same likelihood, so their number determines the conditional entropy
exactly. The noise-syndrome identity below makes this conditioning
precise. This counting argument supplies an
upper bound without requiring a full free-energy formula.

\subsection{A noise-syndrome identity without rank assumptions}

The entropy identity below replaces uncertainty about a codeword by
uncertainty about noise and an auxiliary punctured vector. This makes
the counting problem depend only on the sparse linear map $[A\ B]$.
The vector $U_0$ in the identity is newly sampled; it is not the actual
punctured word $U$. Linearity and the uniform extended-code prior make
the two conditional entropies equal even when $A$ or $B$ is singular.

\begin{lemma}\label{lem:channel-syndrome}
Fix $G$. Let $E\in\F^N$ be the noise of a BSC, let $U_0$ be uniform
on $\F^K$ independently of $E$, and put $T=AU_0+BE$. If $Y=V+E$,
then
\begin{equation}\label{eq:channel-syndrome}
 H(U,V\mid Y,G)=H(U_0,E\mid T,G).
\end{equation}
\end{lemma}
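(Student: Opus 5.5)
We prove the identity by writing both sides as ``entropy of the unknowns minus entropy of the observation.'' Both sides then reduce to the same expression in $\dim\mathcal C_{\rm ext}(G)$, $H(E)$, and $H(BE)$. Throughout, $G$ is fixed and all entropies are conditional on $G$, which we suppress.

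\emph{Left side.} $(U,V)$ is uniform on $\mathcal C_{\rm ext}=\ker[A\ B]$ and independent of $E$. Since $Y=V+E$, the triple $(U,V,Y)$ is in bijection with $(U,V,E)$. Hence
\[
H(U,V\mid Y)=H(U,V,E)-H(Y)=\dim\mathcal C_{\rm ext}+H(E)-H(Y).
\]

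\emph{Right side.} $T$ is a function of $(U_0,E)$, so
\[
H(U_0,E\mid T)=H(U_0,E)-H(T)=K+H(E)-H(T).
\]
It therefore suffices to show
\[
H(T)-H(Y)=K-\dim\mathcal C_{\rm ext}.
\]

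\emph{Key step.} We compare $Y$ and $T$ through the map $y\mapsto By$.

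First, $BY=BV+BE$. Since $Au+Bv=0$, we have $BV=AU$, so $BY=AU+BE$. Here $AU$ is uniform on the subspace $S=A(\mathcal F)\subseteq\operatorname{im}A\cap\operatorname{im}B$, where $\mathcal F=\{u:Au\in\operatorname{im}B\}$ is the feasible set.

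Conditional on $BY$, the vector $Y$ is uniform on a coset of $\ker B$. Indeed, $V$ is uniform on $C(G)$, which is a union of cosets of $\ker B$, and $C(G)\supseteq\ker B$. Adding $E$ then shifts the uniform-on-coset structure. Hence
\[
H(Y)=H(BY)+\dim\ker B.
\]

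Second, $T=AU_0+BE$ with $AU_0$ uniform on $\operatorname{im}A$ and independent of $E$. Adding a uniform element of a subspace $L$ to an independent $Z$ gives entropy $\dim L+H(Z\bmod L)$. Applying this twice yields
\[
H(T)=\rank A+H(BE\bmod\operatorname{im}A),\qquad
H(BY)=\dim S+H(BE\bmod S).
\]
These two conditional terms do not obviously match, because $S$ may be a proper subspace of $\operatorname{im}A$. This is the main obstacle.

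\emph{Handling the obstacle.} The resolution is that $BE$ lies in $\operatorname{im}B$, and $S=\operatorname{im}A\cap\operatorname{im}B$. (The inclusion $S\supseteq\operatorname{im}A\cap\operatorname{im}B$ holds by the definition of $\mathcal F$.) The map $\operatorname{im}B/S\to\F^N/\operatorname{im}A$ is injective, since its kernel is $(\operatorname{im}A\cap\operatorname{im}B)/S=0$. Therefore
\[
H(BE\bmod S)=H(BE\bmod\operatorname{im}A).
\]

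\emph{Dimension count.} Combining the pieces,
\[
H(T)-H(Y)=\rank A-\dim S-\dim\ker B.
\]
It remains to check $\dim\mathcal C_{\rm ext}=K-\rank A+\dim S+\dim\ker B$. The projection $(u,v)\mapsto u$ maps $\mathcal C_{\rm ext}$ onto $\mathcal F$ with kernel $\{0\}\times\ker B$, so $\dim\mathcal C_{\rm ext}=\dim\mathcal F+\dim\ker B$. Moreover $\mathcal F=A^{-1}(S)$, so $\dim\mathcal F=\dim\ker A+\dim S=K-\rank A+\dim S$. This gives the required count and proves~\eqref{eq:channel-syndrome}.
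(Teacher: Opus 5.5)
Your proof is correct, but it takes a different route from the paper's. The paper takes $D$ with columns forming a basis of $\ker A^T$ and shows that $D^TB$ is a (possibly redundant) parity-check matrix of $C(G)$. It then splits the left side by the chain rule into $H(V\mid Y,G)+H(U\mid V,Y,G)$. The first term equals $H(E\mid D^TBE,G)$ by the syndrome-decoding description of the posterior noise. The second equals $\dim\ker A$, because $U$ given $V$ is uniform on a coset of $\ker A$. On the right side, $D^TT=D^TBE$ and the uniformity of $AU_0$ on $\operatorname{im}A$ give $H(U_0,E\mid T,G)=\dim\ker A+H(E\mid D^TBE,G)$, so the two sides match term by term.

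You instead write both sides as joint entropy minus observation entropy. You evaluate $H(Y)$ and $H(T)$ by the ``uniform subspace plus independent vector'' rule, reconcile the residual terms through the injection $\operatorname{im}B/(\operatorname{im}A\cap\operatorname{im}B)\to\F^N/\operatorname{im}A$, and finish with rank--nullity. Your route never identifies $C(G)^\perp$ or a parity-check matrix. It needs only images, kernels, and the translation invariance of the law of $Y$ under $\ker B\subseteq C(G)$. That invariance is the precise justification for your step $H(Y)=H(BY)+\dim\ker B$, and it is worth stating in those terms. The paper's route produces the more informative intermediate identity $H(U,V\mid Y,G)=\dim\ker A+H(E\mid D^TBE,G)$, which separates the invisible punctured directions from the syndrome-decoding uncertainty.

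You could also skip your detour through $BY$. Directly, $H(Y)=\dim C(G)+H(E\bmod C(G))$, and the map $e\mapsto Be+\operatorname{im}A$ has kernel exactly $C(G)$. This is the paper's syndrome fact in quotient form, and it gives the identity immediately.
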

\begin{proof}
Let $D$ have columns forming a basis of $\ker A^T$, and write
$S=D^TBE$. The rows of $D^TB$ span $C(G)^\perp$. Indeed, $z$ annihilates
$C(G)$ precisely when $(0,z)$ is in the row space of $[A\ B]$,
or equivalently when $z=B^Td$ for some $d\in\ker A^T$.
Thus $D^TB$ is a possibly redundant parity-check matrix for the
transmitted code. For invertible $B$, this agrees with the usual
MN--HA dual description~\cite[Sec.~3.2]{kasaiHA}; the derivation above
also covers singular $B$.

The random vector $AU_0$ is uniform on $\operatorname{im}A$. Hence,
conditional on $S$, the vector $T$ is uniform on the corresponding
coset of $\operatorname{im}A$, and
\[
 H(T\mid G)=\rank A+H(S\mid G).
\]
Since $T$ is a function of $(U_0,E)$,
\[
 H(U_0,E\mid T,G)=\dim\ker A+H(E\mid S,G).
\]
For a uniform linear-code input, the posterior noise distribution
given $Y=y$ is its original distribution restricted to the coset with
syndrome $D^TBy$, the standard syndrome-decoding description
\cite[Sec.~1.7.3]{RichardsonUrbanke08}. Consequently
$H(V\mid Y,G)=H(E\mid S,G)$.
For each admissible $V$, the punctured variable $U$ is uniform on the
set of its possible punctured words, of size $2^{\dim\ker A}$, also after observing $Y$. This proves
\eqref{eq:channel-syndrome} without an invertibility assumption on
either sparse matrix.
\end{proof}

\subsection{Finite noise types on the BSC}

The purpose of conditioning on the noise weight is to turn a weighted
posterior into a uniform distribution on a finite set. The size of that
set can then be bounded by exact socket counts. The threshold $p_*$
below is the BSC noise level whose capacity equals $R$; the positive
gap $p_*-p$ makes noise weights above this threshold exponentially rare.

Let $p_*=h_2^{-1}(1-R)$, where $h_2^{-1}$ takes values in $[0,1/2]$.
For $0\le p<p_*$, define
\begin{equation}\label{eq:channel-bound-definition}
 \mathcal B_N(p)=2\log_2(N+1)+\log_2(K+1)+2\log_2(3N+1)
 +(K+N)\exp\{-2N(p_*-p)^2\}.
\end{equation}

\begin{lemma}[BSC conditional-entropy bound]\label{lem:bsc-joint-entropy}
For every $0\le p<p_*$,
\begin{equation}\label{eq:bsc-joint-entropy}
 \E_G H(U,V\mid Y(\BSC(p)),G)\le\mathcal B_N(p).
\end{equation}
In particular, the left side is $O_{p,\ell}(\log N)$.
\end{lemma}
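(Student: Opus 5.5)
Using Lemma~\ref{lem:channel-syndrome}, we bound $\E_G H(U_0,E\mid T,G)$ instead of the left side of \eqref{eq:bsc-joint-entropy}. Let $w=|E|$. Since $w$ is a function of $E$, we have $H(U_0,E\mid T,G)\le H(w)+H(U_0,E\mid T,w,G)$, and $H(w)\le\log_2(N+1)$. Conditional on $w$ and $T=t$, the pair $(U_0,E)$ is uniform on the set of pairs $(u,e)$ with $|e|=w$ and $Au+Be=t$: within a weight class the BSC noise is uniform, and $U_0$ is uniform and independent of $E$. Hence $H(U_0,E\mid T=t,w,G)=\log_2 Z_t$, where $Z_t$ counts these pairs. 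Averaging over $t$ with weights $\Pr\{T=t\mid w\}$ and applying Jensen's inequality gives $H(U_0,E\mid T,w,G)\le\log_2\E[Z_T\mid w,G]$. Moreover, $\E[Z_T\mid w,G]$ equals the expected number of pairs $(u',e')$ with $|e'|=w$ and $A(u'+U_0)+B(e'+E)=0$. Setting $(x,z)=(u'+U_0,e'+E)$, this is at most the number of kernel pairs $(x,z)\in\ker[A\ B]$ with $|z|\le N$, weighted by the conditional probability that $E+z$ again has weight $w$. The cleanest route is to bound it by $2^{-K}\cdot(\#\text{pairs})$ stratified by $(|x|,|z|)$ together with the overlap of $z$ and $E$. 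I plan to carry out this stratification explicitly.

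Next I average over $G$. Fix weights $|x|=i$, with $a=i/K$, and $|z|=j$, with $b=j/N$. By the socket construction, the expected number of kernel pairs with these weights is at most the exact multi-edge enumerator \cite{kasaiMET}: $\binom{K}{i}\binom{N}{j}$ times the number of ways of distributing active sockets so that every check has even parity, divided by $\binom{3N}{\ell i}\binom{3N}{3j}$. Parallel edges reduced modulo two only enlarge the set of configurations counted before reduction, so they are harmless. The number of even-parity socket profiles is a coefficient of $F(x,y)^N$ with total active counts $\ell i=3aN$ and $3j$. The coefficient bound below \eqref{eq:phi-definition} therefore gives at most $\exp\{N\Phi(a,b)\}$. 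Standard binomial estimates then give
\[
\binom{K}{i}\binom{N}{j}\Big/\Bigl[\binom{3N}{3aN}\binom{3N}{3bN}\Bigr]\le(3N+1)^2\exp\{N[Rh(a)+h(b)-3h(a)-3h(b)]\}.
\]
For the noise factor, the number of $e'$ with $|e'|=w$ and $e'+E=z$ is at most $\binom{N}{w}\cdot\min\{1,\binom{N}{w}^{-1}\cdots\}$. The honest form of this factor is $\min\{e^{Nh(b)},\binom{N}{w}\}\cdot\binom{N}{w}^{-1}$, multiplied by the $2^{-K}$ normalisation that cancels the $u$-freedom; equivalently, it acts as the cap $\min\{h(b),(1-R)\ln2\}$ when $w/N\le p_*$.

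Combining these estimates, the exponent is precisely the left side of \eqref{eq:rate-envelope}, except that $R$ and $(1-R)\ln 2$ appear in place of $h(w/N)$. When $w\le p_*N$, we have $h(w/N)\le(1-R)\ln2$. Lemma~\ref{lem:envelope} then shows that every stratum contributes at most a polynomial factor. Summing over the at most $(K+1)(N+1)$ strata and taking $\log_2$ produces the terms $\log_2(N+1)+\log_2(K+1)+2\log_2(3N+1)$. Together with the $\log_2(N+1)$ from $H(w)$, this accounts for the logarithmic part of $\mathcal B_N(p)$.

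Finally, consider the atypical weights $w>p_*N$. On that event we use the trivial bound $H(U_0,E\mid T,w,G)\le K+N$. Hoeffding's inequality gives $\Pr\{w>p_*N\}\le\exp\{-2N(p_*-p)^2\}$, which yields the last term of \eqref{eq:channel-bound-definition}. The typical weights are handled by the previous step, so the bound holds after averaging over $w$. The main obstacle is the second step: matching the averaged count exactly to the form of Lemma~\ref{lem:envelope}. In particular, one must verify that the noise-overlap factor produces $\min\{h(b),(1-R)\ln2\}$ and not a weaker quantity. If the overlap stratification becomes messy, I would fall back on bounding the count of noise vectors $e'$ in the coset simply by $\binom{N}{w}\le 2^{N(1-R)}$ on one branch and by $e^{Nh(b)}$ on the other, and then take the minimum.
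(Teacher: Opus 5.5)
Your outline is the paper's own proof: the syndrome identity, conditioning on $W_0=|E|$, uniformity on each fibre, Jensen's inequality, the exact two-type enumerator with the coefficient bound by $\Phi$, type-class bounds, Lemma~\ref{lem:envelope}, and Hoeffding with the trivial bound $K+N$ for $w>p_*N$. The gap is the step you flagged yourself, the noise-overlap factor. As written, it does not give a valid bound.

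In the weighted-kernel form you set up, $u'$ ranges over all of $\F^K$ and translation by $U_0$ is a bijection. Hence $\E[Z_T\mid w,G]=\sum_{(x,z)\in\ker[A\ B]}\Pr\{|E+z|=w\mid |E|=w\}$ exactly, and no $2^{-K}$ appears. Such a factor arises only if you write $\E[Z_T\mid w,G]=\sum_t Z_t^2/(2^K\binom{N}{w})$, and there the free punctured vector cancels it. So no $R$ may survive in the exponent. More seriously, putting $\min\{e^{Nh(b)},\binom{N}{w}\}\binom{N}{w}^{-1}$ on top of the $\binom{N}{j}$ already in the enumerator undercounts. For $j=2$ and $w\approx N/4$ it makes the stratum exponentially small, whereas its true contribution carries the factor $w(N-w)$. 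So your claim that the combined exponent equals the left side of \eqref{eq:rate-envelope} does not follow from the factors you display.

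The repair is to bound the product, not the probability. For $|z|=j$ the weight is $\binom{j}{j/2}\binom{N-j}{w-j/2}/\binom{N}{w}$. Double counting triples $(E,z)$ with $|E\cap z|=j/2$ gives $\binom{N}{j}\binom{j}{j/2}\binom{N-j}{w-j/2}=\binom{N}{w}\binom{w}{j/2}\binom{N-w}{j/2}$. Thus $\binom{N}{j}$ is \emph{replaced} by $\binom{w}{j/2}\binom{N-w}{j/2}$, the number of $e'$ with $|e'|=w$ and $|E+e'|=j$. This recovers exactly the summand of \eqref{eq:bsc-exact-count}.

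Your fallback then works as intended. This count is at most $\min\{\binom{N}{j},\binom{N}{w}\}\le\exp\{N\min\{h(b),h(w/N)\}\}$, and $h(w/N)\le(1-R)\ln2$ when $w\le p_*N$. So every stratum exponent is at most the left side of \eqref{eq:rate-envelope}, and Lemma~\ref{lem:envelope} applies. The paper states the same bound as $J_q(\beta)\le\min\{h(\beta),h(q)\}$ and proves it by a conditioning argument; your direct count is equivalent and slightly simpler.

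With this correction, the remaining steps go through and reproduce $\mathcal B_N(p)$ term by term: Jensen over $T$ and then over $G$, the sum over at most $(K+1)(N+1)$ strata with the factor $(3N+1)^2$, and $H(W_0)\le\log_2(N+1)$.
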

\begin{proof}
Write $W_0=|E|$. Conditional on $W_0=w$, the BSC noise is uniform on
the set of vectors of that weight. To count candidates with the same
syndrome, we use the method for sparse codes~\cite[Sec.~VI-B]{MacKay99}.
For a fixed vector $e$ of that weight, define
\[
 L_G(e,w)=\sum_{u\in\F^K}\sum_{\substack{e'\in\F^N\\|e'|=w}}
               \mathbf 1\{Au+B(e+e')=0\}.
\]
This counts pairs with the same image under the map $(u',e')\mapsto Au'+Be'$
as a given pair $(u_0,e)$, after translating $u'$ by $u_0$.

The coefficient method for multi-edge configuration
enumerators~\cite{kasaiMET} gives the following exact expression:
\begin{equation}\label{eq:bsc-exact-count}
 \E_G L_G(e,w)=
 \sum_{a=0}^{K}\ \sum_{\substack{t\ \text{even}\\
                         0\le t\le2\min(w,N-w)}}
 \frac{\displaystyle\binom Ka\binom w{t/2}\binom{N-w}{t/2}
                  [x^{\ell a}y^{3t}]F(x,y)^N}
      {\displaystyle\binom{3N}{\ell a}\binom{3N}{3t}}.
\end{equation}
To verify the expression directly, select the $a$ nonzero coordinates
of $u$. If $|e'|=|e|=w$ and $|e+e'|=t$, then $t$ is even and
$t/2$ coordinates are changed in each of the support and the
complement of $e$. Under the two independent uniform matchings, the
active row sockets are uniform subsets of sizes $\ell a$ and $3t$.
The coefficient in the numerator counts exactly those subsets for
which each row has even total parity. This reasoning remains valid
when parallel edges cancel modulo two.

The denominator in \eqref{eq:bsc-exact-count} counts all possible
active-socket subsets of the prescribed totals, separately for the
two edge types. The numerator counts the subsets satisfying every
check. Their ratio is therefore the probability that these fixed
supports satisfy the graph constraints. Independence is used between
the two matchings; the check events themselves need not be independent.

Put $\alpha=a/K$, $\beta=t/N$, and $q=w/N$. The exponent in the
count of noise vectors of the prescribed weight is
\[
 J_q(\beta)=q h\!\left(\frac{\beta}{2q}\right)
        +(1-q)h\!\left(\frac{\beta}{2(1-q)}\right),
 \qquad 0\le\beta\le2\min(q,1-q),
\]
with continuous endpoint conventions. Conditioning reduces entropy~\cite[Thm.~2.6.5]{CoverThomas06},
which gives the two bounds
\begin{equation}\label{eq:bsc-shell-envelope}
 J_q(\beta)\le\min\{h(\beta),h(q)\}.
\end{equation}
For the first bound, consider binary variables $E_0,D_0$ with
$\Pr(E_0=1)=q$ and
$\Pr(D_0=1\mid E_0=1)=\beta/(2q)$,
$\Pr(D_0=1\mid E_0=0)=\beta/(2(1-q))$.
Their conditional entropy is $J_q(\beta)$ in natural-log units, and
$\Pr(D_0=1)=\beta$.
For the second bound, $E'_0=E_0+D_0$ also has mean $q$, so
$J_q(\beta)=(\ln2)H(E'_0\mid E_0)\le(\ln2)H(E'_0)=h(q)$.

Use the binary type-class bounds
\cite[Ex.~11.1.3, Eq.~(11.40)]{CoverThomas06}
\[
 \frac{\exp\{n h(k/n)\}}{n+1}\le\binom nk
             \le\exp\{n h(k/n)\},
\]
and coefficient positivity in $F^N$. Each summand in
\eqref{eq:bsc-exact-count} is then at most
\[
 (3N+1)^2\exp\left[N\{(R-3)h(\alpha)-3h(\beta)
                       +J_q(\beta)+\Phi(\alpha,\beta)\}\right].
\]
If $h_2(q)\le1-R$, the exponent is nonpositive by
\eqref{eq:bsc-shell-envelope} and Lemma~\ref{lem:envelope}.
There are at most $(K+1)(N+1)$ summands. Hence, for every such noise weight,
\begin{equation}\label{eq:bsc-fiber-bound}
 \E_G L_G(e,w)\le(K+1)(N+1)(3N+1)^2.
\end{equation}
This coefficient bound holds at finite length.

The passage from counting to entropy is where the order of averaging
matters. For a fixed graph and syndrome, the uncertainty within a noise
class is the logarithm of its number of admissible pairs. We therefore
need an average of $\log L_G$, whereas the enumerator provides an
average of $L_G$. Concavity of the logarithm and Jensen's inequality
\cite[Thm.~2.6.2]{CoverThomas06} give
\[
 \E_G\log_2 L_G(e,w)\le\log_2\E_G L_G(e,w).
\]
The true pair is always admissible, so $L_G(e,w)\ge1$ and the logarithm
is well defined. The polynomial bound \eqref{eq:bsc-fiber-bound} thus
gives logarithmic uncertainty on these classes. Revealing the weight
costs at most $\log_2(N+1)$ bits, and the remaining classes are handled
by their exponentially small probability. The average of the logarithmic
count is bounded by the logarithm of the average count. In statistical
physics these are called the quenched and annealed averages, respectively~\cite[Sec.~8.2.3]{Nishimori01};
equality between those averages is not assumed.

Conditional on $(T,W_0,G)$, the pair $(U_0,E)$ is uniform on the set
of pairs with that syndrome and noise weight. At each fixed graph,
this gives the exact identity
\[
 H(U_0,E\mid T,W_0,G)
 =\E_{U_0,E\mid G}\log_2 L_G(E,W_0).
\]
The count is independent of the actual auxiliary vector $U_0$ because
translation by $U_0$ bijects each syndrome set with the set defining
$L_G$. Since $W_0$ is a function of $E$, the entropy chain rule first
adds $H(W_0\mid T,G)\le H(W_0)$; it does not assume that weight and
syndrome are independent. The entropy identity, the chain rule and maximum-entropy bound for
the revealed weight $W_0$~\cite[Secs.~2.5--2.6]{CoverThomas06},
and Jensen's inequality therefore yield
\begin{align*}
 \E_G H(U,V\mid Y,G)
 &\le H(W_0)+\E_{G,U_0,E}\log_2 L_G(E,W_0)\\
 &\le 2\log_2(N+1)+\log_2(K+1)+2\log_2(3N+1)\\
 &\quad +(K+N)\Pr\{W_0/N>p_*\}.
\end{align*}
Here $W_0/N\le p_*$ implies $h_2(W_0/N)\le1-R$, and on the
exceptional event we used $L_G\le2^{K+N}$. Since
$W_0\sim\operatorname{Bin}(N,p)$, Hoeffding's inequality~\cite{Hoeffding63} bounds the
last probability by $\exp\{-2N(p_*-p)^2\}$, proving the lemma.
\end{proof}

\subsection{The comparison with an equal-capacity BSC}

For a binary-input channel $W$, write $I_q(W)$ for its mutual
information under a $\Ber(q)$ input. A channel is more capable than
another if its mutual information is at least as large for every
input distribution~\cite{KornerMarton77}.

\begin{lemma}[{\cite[Lemma 7.1]{Sasoglu11}}]
\label{lem:bms-more-capable}
If $W$ is BMS with capacity $C$ and $p_C=h_2^{-1}(1-C)$, then
\[
 I_q(W)\ge I_q(\BSC(p_C))\qquad(0\le q\le1).
\]
\end{lemma}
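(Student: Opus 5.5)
The plan is to use the standard decomposition of a BMS channel into a mixture of BSCs, together with the convexity of the binary entropy composed with binary convolution that underlies Mrs.\ Gerber's lemma~\cite{WynerZiv73}.

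\emph{Step 1: decomposition.} Write $W$ as a mixture of binary symmetric subchannels: the output is a pair $(Z,Y')$, where $Z$ is an auxiliary random variable with crossover parameter $Z=z\in[0,1/2]$ drawn from a law $\mu$ independent of the input, and $Y'$ is the output of $\BSC(z)$. For a $\Ber(q)$ input $X$, the mixture variable $Z$ is independent of $X$, so
\[
 I_q(W)=I(X;Z,Y')=\int \bigl[h_2(q\ast z)-h_2(z)\bigr]\,d\mu(z),
\]
with $q\ast z=q(1-z)+(1-q)z$. The capacity is attained at $q=1/2$, which gives $C=\int[1-h_2(z)]\,d\mu(z)$.

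\emph{Step 2: reparametrization.} Put $s=h_2(z)\in[0,1]$ and define
\[
 f_q(s)=h_2\bigl(q\ast h_2^{-1}(s)\bigr).
\]
Then $I_q(W)=\E[f_q(S)-S]$, where $S=h_2(Z)$ and $\E S=1-C=h_2(p_C)$. For the comparison channel, $I_q(\BSC(p_C))=f_q(h_2(p_C))-h_2(p_C)$. The desired inequality is therefore $\E f_q(S)\ge f_q(\E S)$, which is Jensen's inequality once $f_q$ is known to be convex on $[0,1]$.

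\emph{Step 3: convexity.} Convexity of $s\mapsto h_2(q\ast h_2^{-1}(s))$ is exactly the Wyner--Ziv lemma behind Mrs.\ Gerber's lemma~\cite{WynerZiv73}, and I would cite it directly. Proving it from scratch would be the main obstacle: one needs a second-derivative computation in the variable $z$, showing that $h_2''(q\ast z)(1-2q)^2 h_2'(z)-h_2'(q\ast z)(1-2q)h_2''(z)\ge0$ with the correct sign. The endpoint cases $q\in\{0,1\}$, $s=0$, and $s=1$ are handled by continuity.

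\emph{Step 4: conclusion.} Combining Steps 2 and 3 gives $I_q(W)\ge I_q(\BSC(p_C))$ for every $q\in[0,1]$, which is the statement of the lemma.
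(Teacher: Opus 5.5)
Your proposal is correct and follows essentially the same approach as the paper. The paper also represents $W$ as a mixture of BSCs whose crossover $P$ is known to the receiver, writes $I_q(W)=\E h_2(q\star P)-h_2(p_C)$ using $\E h_2(P)=1-C=h_2(p_C)$, and concludes by Jensen's inequality, citing (not rederiving) the convexity of $t\mapsto h_2(q\star h_2^{-1}(t))$ from Mrs.\ Gerber's lemma.
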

\begin{proof}
A BMS channel is equivalent to a mixture of BSCs whose crossover
$P\in[0,1/2]$ is independent of the input and known to the receiver
\cite[Sec.~4.1.5]{RichardsonUrbanke08}.
Thus $\E h_2(P)=1-C=h_2(p_C)$ and, with
$a\star b=a(1-b)+(1-a)b$,
\[
 I_q(W)=\E h_2(q\star P)-h_2(p_C).
\]
The binary-entropy convexity property underlying Mrs.\ Gerber's lemma~\cite{WynerZiv73} states that
$t\mapsto h_2(q\star h_2^{-1}(t))$ is convex on $[0,1]$.
Applying Jensen's inequality gives the claim. This proof applies to
arbitrary crossover distributions, including channels with continuous
outputs; it does not require finite-mixture quantization.
\end{proof}

This comparison by mutual information extends to repeated memoryless
channel uses. An early channel-comparison
proof uses a chain-rule replacement of coordinate outputs
\cite[App.]{CostaElGamal87}; a modern formulation appears
in~\cite[Sec.~I]{ErezOrdentlichShamai24}. We record the argument with an auxiliary variable to make its application to punctured words explicit.

Code coordinates are generally correlated. The comparison must
therefore hold for every single-bit input probability, not just the uniform
input used to define capacity. In the proof below, conditioning on
other outputs changes the input probability at each coordinate. The single-letter
more-capable inequality still applies to that conditional input probability; memorylessness of
the channel, rather than independence of the code coordinates,
justifies each replacement.

\begin{lemma}\label{lem:channel-tensorization}
Let $X=(X_1,\ldots,X_N)$ have any distribution on $\F^N$, and let
$Q$ be any additional finite random variable. Transmit $X$ through
memoryless channels $W$ and $W'$, with channel noises independent of
$Q$ given $X$. If $W$ is more capable than $W'$, their outputs satisfy
\[
 H(Q,X\mid Y(W))\le H(Q,X\mid Y(W')).
\]
\end{lemma}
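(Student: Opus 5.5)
The plan is a hybrid (replacement) argument: replace the channel outputs one coordinate at a time and show that no single replacement decreases the conditional entropy. For $0\le i\le N$, let $Z^{(i)}=(Y_1',\ldots,Y_i',Y_{i+1},\ldots,Y_N)$. Here $Y_j=Y_j(W)$ and $Y_j'=Y_j(W')$, with all channel noises conditionally independent given $X$ and independent of $Q$ given $X$. Since $H(Q,X\mid\cdot)$ depends only on the joint law of $(Q,X)$ and the conditioning outputs, $Z^{(0)}$ gives the left side and $Z^{(N)}$ the right side. It therefore suffices to prove
\[
 H(Q,X\mid Z^{(i-1)})\le H(Q,X\mid Z^{(i)}),\qquad 1\le i\le N.
\]

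Fix $i$ and let $O=(Y_1',\ldots,Y_{i-1}',Y_{i+1},\ldots,Y_N)$ be the common outputs, so $Z^{(i-1)}=(O,Y_i)$ and $Z^{(i)}=(O,Y_i')$. By the chain rule,
\[
 H(Q,X\mid O,Y_i)=H(Q,X,O)-H(O)-I(Q,X;Y_i\mid O),
\]
and the same expansion holds with $Y_i'$. The first two terms coincide, so the task reduces to showing $I(Q,X;Y_i\mid O)\ge I(Q,X;Y_i'\mid O)$. Memorylessness gives the Markov chain $(Q,X,O)\to X_i\to Y_i$: the noise at coordinate $i$ is independent of $Q$, of the other coordinates, and of the other noises given $X_i$. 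Hence $I(Q,X;Y_i\mid O)=I(X_i;Y_i\mid O)$.

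Conditional on $O=o$, the pair $(X_i,Y_i)$ is distributed as a $\Ber(q_o)$ input with $q_o=\Pr(X_i=1\mid O=o)$, passed through $W$. This follows from the same conditional independence, since $O$ depends on $X_i$ only through $X$ and noises independent of the $i$-th noise. Therefore
\[
 I(X_i;Y_i\mid O)=\E_O\, I_{q_O}(W),\qquad I(X_i;Y_i'\mid O)=\E_O\, I_{q_O}(W').
\]
The more-capable hypothesis gives $I_q(W)\ge I_q(W')$ for every $q$, so the pointwise inequality averages to the desired one. Summing over $i$ then proves the lemma.

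The main obstacle is the conditional-independence bookkeeping in the last step. One must verify that conditioning on the mixed output $O$, which contains both $W'$ outputs and $W$ outputs from other coordinates, leaves the $i$-th channel acting as a fresh memoryless channel on $X_i$ with input law $\Ber(q_o)$. This needs the joint noise construction to be fixed explicitly, with all per-coordinate noises mutually independent and independent of $(Q,X)$. It also needs finiteness of the entropies, which holds because $(Q,X)$ is finite, so that $H(Q,X,O)-H(O)$ can be avoided for continuous outputs. To handle that case, I would write everything directly as $H(Q,X)-I(Q,X;O)-I(Q,X;Y_i\mid O)$ using mutual informations, which remain well defined for general output alphabets.
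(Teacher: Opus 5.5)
Your proposal is correct and follows essentially the same hybrid-replacement argument as the paper. Both proofs swap one coordinate's output at a time, use memorylessness to reduce each step to a single-letter mutual information at the conditional bias $\Pr(X_i=1\mid\text{other outputs})$, and apply the more-capable inequality pointwise. The only cosmetic differences are that you order the hybrid the other way and telescope conditional entropies, whereas the paper telescopes mutual informations and subtracts them from $H(Q,X)$.
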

\begin{proof}
Couple outputs $Y_i$ of $W$ and $Z_i$ of $W'$ independently given
$X$. In the difference of the mutual informations of successive
hybrid vectors, put
$S_i=(Y_1,\ldots,Y_{i-1},Z_{i+1},\ldots,Z_N)$.
The memoryless property gives
\begin{align*}
 &I(Q,X;Y_1,\ldots,Y_N)-I(Q,X;Z_1,\ldots,Z_N)\\
 &\quad=\sum_{i=1}^N
       \{I(Q,X;Y_i\mid S_i)-I(Q,X;Z_i\mid S_i)\}\\
 &\quad=\sum_{i=1}^N\E_{S_i}
       \{I_{\Pr(X_i=1\mid S_i)}(W)
                  -I_{\Pr(X_i=1\mid S_i)}(W')\}\ge0.
\end{align*}
Subtract from $H(Q,X)$. No independence of the input coordinates is
required. The channel noises here are independent of $Q$ given $X$,
as in the usual experiment of transmitting $X$ through a channel.
\end{proof}

In particular, apply these two lemmas at every fixed graph with
$Q=U$ and $X=V$, and then average over $G$. For any BMS channel $W$
with $C(W)>R$, Lemma~\ref{lem:bsc-joint-entropy} gives
\begin{equation}\label{eq:bms-joint-entropy}
 \E_G H(U,V\mid Y(W),G)
 \le\mathcal B_N\bigl(h_2^{-1}(1-C(W))\bigr)=O_{W,\ell}(\log N).
\end{equation}
The comparison is applied to joint conditional entropy; a direct
ordering of coordinate MAP errors is not needed.

\subsection{Independent output erasures and bit-MAP error}

Sublinear joint entropy alone does not force small bit-MAP error.
For example, a posterior uniform on the all-zero and all-one vectors
has only one bit of entropy, but every coordinate has MAP error
$1/2$. The additional information needed here is that the entropy
bound also holds on a slightly worse channel. Erasing some outputs
creates that channel; restoring them measures how much uncertainty
was associated with the individual bits. The argument that restores erased observations below
uses the entropy chain rule and mutual-information concavity
\cite[Sec.~2.5 and Thm.~2.7.4]{CoverThomas06}; we give the resulting
bound for independent output erasures explicitly. Changing the observed
outputs one at a time uses the approach to conditional entropy in
the generalized area theorem
\cite[Sec.~II and Thm.~1]{MeassonMontanariRichardsonUrbanke09}.

\begin{lemma}\label{lem:channel-erasure-slack}
Let $W$ be BMS with capacity $C>0$, and let $W_\varepsilon$ replace
its output independently by a flagged complete erasure with
probability $0<\varepsilon\le1$. For any distribution of a binary
vector $V$,
\begin{equation}\label{eq:channel-erasure-slack}
 2\varepsilon C N P_{\rm bit}^{\rm MAP}(W)
 \le H(V\mid Y(W_\varepsilon))-H(V\mid Y(W))
 \le H(V\mid Y(W_\varepsilon)).
\end{equation}
\end{lemma}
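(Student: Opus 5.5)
The right inequality in \eqref{eq:channel-erasure-slack} is immediate because conditional entropy is nonnegative, so the work is in the left inequality. The plan is to restore the erased observations one coordinate at a time, in the style of the generalized area theorem. Couple the outputs so that $Y_i(W_\varepsilon)$ equals $Y_i(W)$ unless an independent indicator $\Delta_i\sim\Ber(\varepsilon)$ flags an erasure. Define the hybrid vector $Z^{(i)}$ whose first $i$ coordinates come from $W$ and whose remaining coordinates come from $W_\varepsilon$. Then $Z^{(0)}=Y(W_\varepsilon)$, $Z^{(N)}=Y(W)$, and
\[
 H(V\mid Y(W_\varepsilon))-H(V\mid Y(W))=\sum_{i=1}^N\bigl\{H(V\mid Z^{(i-1)})-H(V\mid Z^{(i)})\bigr\}.
\]

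Next, bound the $i$th increment. Let $S_i$ denote the outputs at all coordinates other than $i$, with the hybrid convention. The increment equals $I(V;Y_i(W)\mid S_i)-I(V;Y_i(W_\varepsilon)\mid S_i)$. Because the channel is memoryless, $Y_i$ depends on $V$ only through $V_i$, so this difference reduces to $I(V_i;Y_i(W)\mid S_i)-I(V_i;Y_i(W_\varepsilon)\mid S_i)$. The erased output reveals $Y_i(W)$ exactly when $\Delta_i=0$, and $\Delta_i$ is independent of everything else. Hence $I(V_i;Y_i(W_\varepsilon)\mid S_i)=(1-\varepsilon)I(V_i;Y_i(W)\mid S_i)$, and the increment equals $\varepsilon\,\E_{S_i}I_{q_i}(W)$, where $q_i=\Pr(V_i=1\mid S_i)$.

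The remaining task is a single-letter lower bound $I_q(W)\ge 2C\min\{q,1-q\}$, which is the step I expect to need care. By Lemma~\ref{lem:bms-more-capable}, $I_q(W)\ge I_q(\BSC(p_C))$. The map $q\mapsto I_q(W)$ is concave, as a mutual information in the input distribution (this is the concavity cited from Cover--Thomas). It vanishes at $q=0,1$ and equals $C$ at $q=1/2$ by symmetry. Concavity therefore gives $I_q(W)\ge 2C\min\{q,1-q\}$ directly, without needing the BSC comparison.

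Finally, $\min\{q_i,1-q_i\}$ is the MAP error for $V_i$ given $S_i$. The observation set $S_i$ is a degraded version of the full output $Y(W)$, because it omits coordinate $i$ and erases some other coordinates. Conditioning on more information cannot increase the MAP error, so
\[
 \E\min\{q_i,1-q_i\}\ge\min_{\widehat v_i}\Pr\{\widehat v_i(Y(W))\ne V_i\}.
\]
Summing over $i$ gives $\varepsilon\cdot 2C\cdot N P_{\rm bit}^{\rm MAP}(W)$, which is the left inequality of \eqref{eq:channel-erasure-slack}. The main points to check are the factorization $(1-\varepsilon)$ for the flagged erasure and the degradation argument for $S_i$ relative to $Y(W)$.
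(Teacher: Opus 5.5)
Your proof is correct and follows the paper's argument: you restore observations one coordinate at a time, bound each step's entropy drop using $I_q(W)\ge 2C\min\{q,1-q\}$ from concavity, and compare the pre-reveal MAP error with the full-observation MAP error by degradation. The only difference is bookkeeping. You run a hybrid over all coordinates and extract the factor $\varepsilon$ from $I(V_i;Y_i(W_\varepsilon)\mid S_i)=(1-\varepsilon)I(V_i;Y_i(W)\mid S_i)$, whereas the paper fixes the erasure pattern, reveals only the erased coordinates, and obtains $\varepsilon$ by averaging over that pattern.
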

\begin{proof}
For a BMS channel, concavity of $I_q(W)$, its zero values at $q=0,1$,
and $I_{1/2}(W)=C$ imply
$I_q(W)\ge2C\min(q,1-q)$.
Choose which coordinates are erased and restore the missing channel observations
one at a time. Given the observations available just before revealing
coordinate $i$, let $q$ be the conditional probability that $V_i=1$.
The expected entropy decrease upon this reveal is $I_q(W)$.

Indeed, conditional on $V_i$, the new output at coordinate $i$ is
independent of the other variables and the observations already
available. Thus its information about the entire vector equals its
information about $V_i$ under the current conditional bias $q$.
The quantity $\min(q,1-q)$ is precisely the bit-MAP error before the
reveal, which explains the connection between information gain and
decoding error in this step.
For each fixed set of erased coordinates, the expected MAP error before the reveal is at
least the coordinate MAP error using all original observations,
because the former observation is a function of the latter and
MAP minimizes the decision error~\cite[Sec.~1.5]{RichardsonUrbanke08}. This set
is independent of the input and the original channel outputs.
Sum over erased coordinates and average over this set. Each coordinate
is erased with probability $\varepsilon$, proving the first inequality.
Nonnegativity of conditional entropy proves the second.
\end{proof}

For the ensemble and any fixed $W$ with $C=C(W)>R$, choose
\[
 \varepsilon=\frac{C-R}{2C},\qquad
 C(W_\varepsilon)=\frac{C+R}{2},\qquad
 p_\varepsilon=h_2^{-1}\!\left(1-\frac{C+R}{2}\right)<p_*.
\]
This choice spends half of the capacity gap on the artificial
erasures, leaving $W_\varepsilon$ strictly above rate $R$. Hence the
same logarithmic entropy bound applies to the worse channel. Dividing
the total information gain by the positive factor $2\varepsilon C N$
in Lemma~\ref{lem:channel-erasure-slack} then bounds the bit-MAP error
on the original channel. The strict capacity gap is what makes this
conversion possible.

Apply \eqref{eq:bms-joint-entropy} to $W_\varepsilon$, use
$H(V\mid Y,G)\le H(U,V\mid Y,G)$, and average
Lemma~\ref{lem:channel-erasure-slack} over $G$. This proves the
explicit bound
\begin{equation}\label{eq:channel-bit-bound}
 \overline P_{\rm bit}^{\rm MAP}(W)
 \le\frac{\mathcal B_N(p_\varepsilon)}{(C-R)N}
 =O_{W,\ell}(\log N/N).
\end{equation}
The block-MAP conclusion of Theorem~\ref{thm:main} additionally uses
the minimum-distance argument for the transmitted code in Section~\ref{sec:distance}.

\subsection{Actual transmitted rate and the converse}

The socket counts define $R=K/N$, but dependencies among the matrix
rows can change the actual code dimension. We must control both any
excess dimension of the extended code and the dimension of $\ker A$,
whose vectors have no effect on the transmitted word. The noiseless entropy
bound controls the latter; a channel with capacity just above $R$
controls the former.

Write $\kappa(G)=\dim\ker A$, $k_{\rm ext}(G)=\dim\ker[A\ B]$,
and $k_{\rm tx}(G)=\dim C(G)$. Linear algebra gives
\[
 k_{\rm ext}\ge K,
 \qquad k_{\rm tx}=k_{\rm ext}-\kappa.
\]
On the noiseless channel the full conditional entropy is exactly
$\kappa$. Lemma~\ref{lem:bsc-joint-entropy} at $p=0$ therefore
implies $\E_G\kappa=O(\log N)$.
For any fixed $\eta>0$ with $R+\eta<1$, transmit through a BSC of
capacity $R+\eta$. At each graph, the memoryless-channel capacity bound
\cite[Ch.~7]{CoverThomas06} limits the mutual information in $N$
channel uses to $N(R+\eta)$, so the same lemma yields
\[
 \E_G k_{\rm ext}\le N(R+\eta)+O_{\eta,\ell}(\log N).
\]
Consequently
\[
 \E_G\left|\frac{k_{\rm tx}}N-R\right|
 \le\frac{\E_G(k_{\rm ext}-K)+\E_G\kappa}{N}
 \le\eta+O_{\eta,\ell}(\log N/N).
\]
Let $N\to\infty$ first and then $\eta\downarrow0$. This proves
convergence of the actual transmitted rate to $R$ in $L^1$, hence
also in probability.

Finally, for any fixed BMS channel $W$ and each graph,
\[
 k_{\rm tx}\le NC(W)+H(V\mid Y(W),G)
 \le NC(W)+N h_2\bigl(P_{\rm bit}^{\rm MAP}(G,W)\bigr).
\]
The last step is the binary Fano bound applied coordinatewise, followed
by entropy subadditivity and concavity of $h_2$
\cite[Sec.~2.10 and Thms.~2.6.6 and 2.7.3]{CoverThomas06}.
Average over $G$ and use concavity once more. Rate convergence gives,
when $C(W)<R$,
\[
 \liminf_N\overline P_{\rm bit}^{\rm MAP}(W)
       \ge h_2^{-1}(R-C(W))>0,
\]
as stated in Theorem~\ref{thm:main}.

\section{Minimum distance of the transmitted code and block-MAP decoding}\label{sec:distance}

We next upgrade the transmitted-bit estimate to an estimate for the
entire extended word. The argument uses the same MN socket ensemble
and does not condition on either matrix having full rank. Throughout
this section $\ell\ge4$ is fixed, $R=3/\ell$, and
\[
 \delta_e=\frac{1}{10000\ell^3},\qquad
 \delta_v=\frac{\delta_e}{2},\qquad
 \eta=\frac{\delta_e}{2R}=\frac{1}{60000\ell^2},
\]
\[
 c_\ell=\frac{1-R}{2}\min\{h(\eta),h(3/4)\}>0,
 \qquad Q_N=(K+1)(N+1)(3N+1)^2.
\]

\begin{lemma}\label{lem:projected-distance}
There is an event $\mathcal G_N$, depending only on the sampled graph,
such that
\begin{equation}\label{eq:good-graph-bound}
 \Pr_G(\mathcal G_N^c)
 \le \frac{325\ell^3}{N}+Q_N e^{-c_\ell N}.
\end{equation}
On $\mathcal G_N$, $A$ is injective and every nonzero word of $C(G)$
has Hamming weight greater than $\delta_vN$.
\end{lemma}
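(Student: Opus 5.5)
The proof is a first-moment argument over the nonzero pairs $(u,v)\in\mathcal C_{\rm ext}(G)$ with $|v|\le\delta_vN$. Set $\alpha=|u|/K$ and $\beta=|v|/N$. Both failure modes are covered by such pairs: a nonzero $u\in\ker A$ gives the pair $(u,0)$, and a nonzero $v\in C(G)$ of weight at most $\delta_vN$ gives some pair $(u,v)$. Define $\mathcal G_N$ as the event that no such pair exists. By Markov's inequality, $\Pr_G(\mathcal G_N^c)$ is at most the expected number of such pairs, and I split this expectation according to whether $\alpha\le\eta$ or $\alpha>\eta$. The two parts give the two terms of \eqref{eq:good-graph-bound}.

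\emph{Linear-weight regime ($\alpha>\eta$).} Here I reuse the exact socket enumerator behind \eqref{eq:bsc-exact-count}, applied to supports with $a=|u|$ and $t=|v|$ in place of noise differences:
\[
\E_G\#\{(u,v)\}=\sum_{a,t}\binom Ka\binom Nt\frac{[x^{\ell a}y^{3t}]F(x,y)^N}{\binom{3N}{\ell a}\binom{3N}{3t}}.
\]
Using the type-class bounds and $\ell K=3N$, each summand is at most $(3N+1)^2\exp\{N E(\alpha,\beta)\}$, where
\[
E(\alpha,\beta)=Rh(\alpha)+h(\beta)-3h(\alpha)-3h(\beta)+\Phi(\alpha,\beta).
\]
Lemma~\ref{lem:universal} bounds $\Phi(\alpha,\beta)$ by $2h(\alpha)+2h(\beta)+h(\alpha)h(\beta)/\ln2$. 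The $h(\beta)$ terms then cancel exactly, leaving
\[
E(\alpha,\beta)\le h(\alpha)\bigl(R-1+h_2(\beta)\bigr).
\]
Since $\beta\le\delta_v$ is tiny, $h_2(\beta)\le(1-R)/2$, so $E\le-\tfrac{1-R}2h(\alpha)$.

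It remains to keep $h(\alpha)$ away from zero. On the low side, $\alpha>\eta$ gives $h(\alpha)\ge h(\eta)$. On the high side, $\Phi=-\infty$ whenever $|\alpha-\beta|>2/3$, so only $\alpha\le2/3+\beta<3/4$ contributes, and there $h(\alpha)\ge h(3/4)$. Hence $E\le-c_\ell$ on every contributing term. Summing over at most $(K+1)(N+1)$ pairs $(a,t)$ gives the bound $Q_Ne^{-c_\ell N}$.

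\emph{Sublinear regime ($\alpha\le\eta$, $\beta\le\delta_v$).} Here the entropy exponent degenerates, so I switch to a direct combinatorial bound. The support of $(u,v)$ activates $s=\ell a+3t\le\ell\eta K+3\delta_vN=\delta_eN+\tfrac32\delta_eN\le3\delta_eN$ sockets. Every check touching an active socket must receive at least two of them, since parity is even and parallel edges cancel only in pairs, which is still at least two incident sockets. So the active sockets land in at most $s/2$ checks.

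I then bound the probability that fixed supports of sizes $(a,t)$ place their $s$ sockets into some set of $m\le s/2$ checks, using the two independent uniform matchings. This gives a term of order
\[
\binom Ka\binom Nt\binom Nm\Bigl(\frac{6m}{3N}\Bigr)^{s}
\]
up to harmless constants. Then $\binom Ka\binom Nt\le N^{a+t}$, $\binom Nm\le (eN/m)^m$, and $s-m\ge s/2\ge(\ell a+3t)/2$. Because $\ell\ge4$ and $3\ge3$, the power of $N$ in each term is at most $a+t+m-s\le a+t-(\ell a+3t)/2\le-(a+t)/2$, and in the worst small cases it is exactly $-1$. The tiny value of $\delta_e$, of order $\ell^{-3}$, makes the ratio $s/N$ small enough that the series over $(a,t)$ is geometric, dominated by its first terms.

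The minimal configurations are:
\begin{itemize}
\item a single punctured variable whose $\ell$ sockets pair up into checks;
\item a transmitted variable with a doubled edge;
\item two transmitted variables sharing checks.
\end{itemize}
Each has probability $O(1/N)$, and carrying the constants through gives the explicit term $325\ell^3/N$.

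\emph{Main obstacle.} The exponential regime is essentially forced by Lemma~\ref{lem:universal}, since the cancellation of the $h(\beta)$ terms is exact. The delicate part is the sublinear regime. There I must make the "each active check has at least two sockets" union bound rigorous in the presence of parallel edges. I must also sum it uniformly over all $(a,t)$ up to linear size with explicit constants, splitting at the order-$\ell^{-3}$ thresholds so that the geometric ratio stays below one, and check that the smallest-weight terms really give $1/N$ and not $1/\sqrt N$. I would first try bounding each fixed-$s$ layer by $(Cs/N)^{s/2-1}\cdot\ell^3/N$ and verify the crossover to the enumerator regime at $\alpha=\eta$.
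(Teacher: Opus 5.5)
Your overall structure is the paper's. You define $\mathcal G_N$ as the absence of nonzero extended words with $|v|\le\delta_vN$ and bound its complement by a first moment. Your linear regime is correct and coincides with the paper's: the exact cancellation giving $[R-1+h_2(\beta)]h(\alpha)$, the cutoff $\alpha\le 2/3+\beta\le3/4$, and the factor $Q_N$. Your small region $a,t\le\delta_eN/2$ lies inside the paper's region $a+t\le\delta_eN$ (incidentally $\ell\eta K=\ell\delta_eN/2$, not $\delta_eN$, which is harmless).

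The gap is in the small-weight regime, where the estimates you write down do not close. Bounding $\binom Ka\binom Nt\le N^{a+t}$ and tracking only the power of $N$ discards the factor $m^{s-m}$. Already for $a=0$, $t=d$, so that $s=3d$ and $m=3d/2$, your bound on the term is $N^d(em/N)^m=[(3e/2)^{3/2}d^{3/2}N^{-1/2}]^d$. This exceeds one once $d\gg N^{1/3}$, so no geometric series over $d$ up to $\delta_eN$ results, however small $\delta_e$ is. The missing step is to count supports by total weight $d=a+t$: $\sum_{a+t=d}\binom Ka\binom Nt=\binom{K+N}d\le(2eN/d)^d$. Combine this with $\binom Nm(m/N)^{2m}\le(em/N)^m\le(e\ell d/(2N))^{3d/2}$, which uses $3d/2\le m\le\ell d/2$. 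The result is $b_d=[C_\ell\sqrt{d/N}]^d$ with $C_\ell=2e(e\ell/2)^{3/2}$, and $b_{d+1}/b_d\le C_\ell\sqrt{e\delta_e}<\frac14$ because $C_\ell^2e\delta_e=e^6/20000<1/16$.

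Two further points are needed to reach the stated constant with the prescribed $\delta_e=1/(10000\ell^3)$.

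First, the containment probability must be taken per edge type: $\binom{3m}{\ell a}\binom{3m}{3t}/[\binom{3N}{\ell a}\binom{3N}{3t}]\le(m/N)^s$. Your $(6m/(3N))^s=(2m/N)^s$ is not a harmless constant here. It multiplies $C_\ell$ by $8$, which pushes the ratio bound to about $1.14>1$ and $b_2$ to $64e^5\ell^3/N$, so $325\ell^3/N$ is out of reach.

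Second, $d=1$ must be treated separately, because $b_1=C_\ell/\sqrt N$ gives only $N^{-1/2}$. A transmitted singleton has three active sockets, an odd number, so it is never a kernel word. Your ``transmitted variable with a doubled edge'' is therefore not a codeword: the parallel pair cancels and the third edge leaves an odd check. A punctured singleton is impossible for odd $\ell$, and for even $\ell\ge4$ it contributes at most $K(e\ell/(2N))^{\ell/2}\le 3e^2\ell/(4N)<7\ell/N$. With $b_2=e^5\ell^3/N<243\ell^3/N$ these give $\frac43\cdot243\ell^3/N+\ell^3/N=325\ell^3/N$.

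Finally, your intended layer bound should be $b_d$, stated in terms of the total weight $d$. Read with $s$ the socket count, your proposed $(Cs/N)^{s/2-1}\ell^3/N$ is already false at $s=6$. There, two transmitted variables sharing their three checks have expected count of order $1/N$, not $N^{-3}$.
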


\begin{proof}
See Appendix~\ref{app:distance}. The low-weight/linear-weight
first-moment decomposition has a precedent in HA/MN distance analysis
\cite[App.~B]{Kasai26GV}. Here we derive a socket containment bound
for the small-weight region of the two-edge-type ensemble, and use
Lemma~\ref{lem:universal} and the check-polynomial support for the
remaining region.
\end{proof}

A union bound over the transmitted-bit decisions gives only the
upper bound $N P_{\rm bit}$ for their word error probability, which is
insufficient for an $O(\log N/N)$ bit bound. The distance property
in Lemma~\ref{lem:projected-distance} supplies a stronger implication:
selecting a wrong codeword requires errors in a positive fraction of
the transmitted bits. The injectivity of $A$ has a separate role:
it makes each transmitted word determine its punctured variables
uniquely.

\begin{lemma}[From bit-MAP error to block-MAP error]\label{lem:block-upgrade}
For any fixed channel $W$ and the prior specified above, let
$\overline P_{\rm block}^{\rm MAP}(W)$ denote the ensemble-average
block-MAP error probability for the full pair $(U,V)$. Then
\begin{equation}\label{eq:block-upgrade}
 \overline P_{\rm block}^{\rm MAP}(W)
 \le \frac{325\ell^3}{N}+Q_Ne^{-c_\ell N}
      +\frac{2}{\delta_v}\overline P_{\rm bit}^{\rm MAP}(W).
\end{equation}
In particular, the transmitted-bit bound proved above for each fixed
BMS channel with $C(W)>R$ implies
$\overline P_{\rm block}^{\rm MAP}(W)=O(\log N/N)$.
\end{lemma}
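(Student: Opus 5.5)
For each fixed graph $G$ we will build an explicit, possibly suboptimal decoder for the full pair $(U,V)$ and bound its error. Since block-MAP minimizes block error over all decoders, that bound also bounds $P_{\rm block}^{\rm MAP}(G,W)$. We then average over $G$, splitting according to the event $\mathcal G_N$ of Lemma~\ref{lem:projected-distance}.

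On $\mathcal G_N^c$ we bound the block error by one. Lemma~\ref{lem:projected-distance} then contributes $325\ell^3/N+Q_Ne^{-c_\ell N}$ to the average.

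On $\mathcal G_N$ we define the decoder as follows. First compute the bit-MAP decisions $\widehat v_i(Y,G)$ for $i=1,\ldots,N$, and let $\widehat v^{\rm bit}$ be the resulting vector. Then let $\widehat v$ be a codeword of $C(G)$ nearest to $\widehat v^{\rm bit}$ in Hamming distance, with any fixed tie rule. Because $A$ is injective on $\mathcal G_N$, there is exactly one $u$ with $Au=B\widehat v$; we output that $u$ as $\widehat u$.

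Suppose $\widehat v\ne V$. Then $\widehat v$ and $V$ are distinct codewords, so $d_H(\widehat v,V)>\delta_vN$. By the choice of $\widehat v$ we have $d_H(\widehat v,\widehat v^{\rm bit})\le d_H(V,\widehat v^{\rm bit})$. The triangle inequality therefore gives
\[
 d_H(V,\widehat v^{\rm bit})\ge\tfrac12 d_H(\widehat v,V)>\delta_vN/2 .
\]
If $\widehat v=V$, then injectivity of $A$ forces $\widehat u=U$. Hence the block error event is contained in $\{M>\delta_vN/2\}$, where $M=d_H(V,\widehat v^{\rm bit})$ is the number of wrong bit decisions. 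By definition, $\E[M\mid G]=N P_{\rm bit}^{\rm MAP}(G,W)$. Markov's inequality then gives
\[
 P_{\rm block}^{\rm MAP}(G,W)\le\frac{2}{\delta_v}P_{\rm bit}^{\rm MAP}(G,W)
\]
for every $G\in\mathcal G_N$.

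Averaging over $G$, we bound $\E_G[\mathbf 1_{\mathcal G_N}P_{\rm bit}^{\rm MAP}(G,W)]$ by $\overline P_{\rm bit}^{\rm MAP}(W)$ and add the contribution from $\mathcal G_N^c$; this yields \eqref{eq:block-upgrade}. For the final claim, insert \eqref{eq:channel-bit-bound}. Since $\delta_v$ depends only on $\ell$ and $Q_Ne^{-c_\ell N}$ is exponentially small, the right side is $O_{W,\ell}(\log N/N)$.

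There is no real obstacle here, because all of the difficulty sits in Lemma~\ref{lem:projected-distance}. The one step that needs care is the domination claim: the block-MAP decoder must minimize error among all measurable decoders of $(U,V)$, including those that output the pair $(\widehat u,\widehat v)$ constructed above, which need not be the most likely pair. This holds directly from definition~\eqref{eq:block-definition}, which takes the minimum over all decoders.
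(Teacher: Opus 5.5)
Your proposal is correct and follows the paper's proof essentially step by step. On $\mathcal G_N$ you decode to the codeword nearest the bit-MAP vector, use the half-distance triangle argument with Markov's inequality, and recover $U$ through injectivity of $A$. On $\mathcal G_N^c$ you use the trivial bound one, then average with Lemma~\ref{lem:projected-distance} and note that block-MAP dominates your decoder.
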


\begin{proof}
Fix a graph in $\mathcal G_N$. Let $b(Y)\in\F^N$ be the vector of
coordinatewise MAP decisions, breaking ties in any fixed way. Choose
$\widehat V(Y)\in C(G)$ to minimize Hamming distance to $b(Y)$.
On this graph each transmitted word determines a unique punctured word,
so this defines a decoder for $(U,V)$ as well. The standard half-minimum-distance argument
\cite[Sec.~1.4]{RichardsonUrbanke08} implies that if
$\widehat V\ne V$, then
\[
 \delta_vN<d_H(V,\widehat V)
 \le d_H(V,b)+d_H(b,\widehat V)
 \le2d_H(V,b).
\]

The last inequality follows from the nearest-codeword choice:
the true word $V$ is also an admissible candidate, so
$d_H(b,\widehat V)\le d_H(b,V)$. A block error therefore forces
$d_H(V,b)>\delta_vN/2$. Markov's inequality
\cite[Thm.~1.6.4]{Durrett19} now uses only the
expected number of bit errors, with no independence assumption on
those errors.
It follows that this decoder's block error probability is at most
$2P_{\rm bit}^{\rm MAP}(G,W)/\delta_v$, where the latter bit error
is averaged over the $N$ transmitted coordinates. The block-MAP decoder
has no larger error. On $\mathcal G_N^c$ use the bound one, and then
average over $G$ and apply \eqref{eq:good-graph-bound}. This proves
\eqref{eq:block-upgrade}. The decoder comparison itself does not require
channel symmetry; symmetry is used in the preceding bit-MAP bound.
\end{proof}

\section{Conclusion}\label{sec:conclusion}
The independent uncoupled $(\ell,3,3)$ MN socket ensemble
attains the MAP capacity threshold for every fixed integer $\ell\ge4$.
The proof gives actual-rate convergence and a vanishing block error
probability for each BMS channel whose capacity exceeds the design
rate. It thereby establishes the fixed-degree prediction for this
specific family through finite-length counting and an explicit
parity entropy bound.

The local inequality~\eqref{eq:universal} is independent of both
$\ell$ and the channel. It controls the competing-word exponent
and supplies the negative exponent needed to bound the minimum distance
of the transmitted code.
The BSC-to-BMS step uses an established information ordering; the
additional error-probability arguments are necessary because that
ordering alone does not compare bit or block MAP error.

Further questions include analogous local entropy inequalities for
other MN degrees and sharper finite-length distance and error bounds.
An analysis of efficient decoding would require additional structure,
such as spatial coupling, beyond the uncoupled MAP argument given here.

\appendix

\section{Proof of the entropy bound for an even-parity check}\label{app:parity}
Put $L=\ln2$ and
\[
 \Delta(a,b)=\Phi(a,b)-2h(a)-2h(b)-h(a)h(b)/L.
\]
We prove $\Delta\le0$ by reducing the interior stationary points to the
diagonal and then giving one-variable bounds. All the estimates below
are analytic; the accompanying rational-arithmetic script is only an
optional check of the displayed polynomial identities and constants.

The diagonal reduction is a statement about possible maxima, not
an assumption of equal marginal means. Symmetry alone permits
maxima away from the diagonal. We first settle the boundary and
prove that every interior stationary point of the reduced square
lies on the diagonal. A positive global maximum would then have to
be among those diagonal points, which the scalar estimates exclude.

\paragraph{Symmetry and boundary values.}
Simultaneous complementation gives
$\Phi(1-a,1-b)=\Phi(a,b)$. For $q=1-2a$, $s=1-2b$, and
$D_a=L-h(a)$, substituting $x=a/(1-a)$, $y=b/(1-b)$ gives
\begin{equation}\label{eq:independent}
 \Delta(a,b)\le-D_aD_b/L+\ln(1+q^3s^3).
\end{equation}
Endpoint substitutions are understood as limits. If $qs\le0$, the right side
is nonpositive. Simultaneous complementation therefore reduces the
remaining domain to $[0,1/2]^2$.

The constrained-entropy interpretation of $\Phi$ can be obtained
directly from the finite variational formula for entropy
\cite[Sec.~3.6 and Thm.~3.4]{WainwrightJordan08}. Averaging
over permutations within each triple preserves the two means and
cannot reduce entropy, by entropy concavity
\cite[Thm.~2.7.3]{CoverThomas06}, so the maximizing distribution has marginal
means $a,a,a,b,b,b$. When $a=0$, the first triple vanishes and the last
triple has even parity. Any two of its bits determine the third;
subadditivity~\cite[Thm.~2.6.6]{CoverThomas06} gives $\Phi(0,b)\le2h(b)$ and thus $\Delta(0,b)\le0$.
The same argument applies when $b=0$. When $a=1/2$, substituting $x=1$
gives $\Phi(1/2,b)\le2L+3h(b)$. Equality is attained by choosing the
last three bits independently with mean $b$ and, conditional on their
parity, choosing the first triple uniformly among its four admissible
vectors. Hence $\Delta(1/2,b)=0$, and likewise
$\Delta(a,1/2)=0$.

\paragraph{Interior stationary points lie on the diagonal.}
As an infimum of affine continuous functions of $(a,b)$, $\Phi$ is
upper semicontinuous. It is finite on $[0,1/2]^2$, so $\Delta$ attains
a maximum there. At an interior point, the mean pair $(3a,3b)$ lies
in the interior of the convex hull of the exponent pairs of $F$:
that convex hull contains $[0,2]^2$. The probability of a socket
configuration is proportional to the monomial it contributes to the
check polynomial. The associated finite exponential
family consequently has a unique finite minimizer $(x,y)$, with a
positive-definite covariance matrix. Thus $\Phi$ is differentiable
there and its derivatives are $-3\ln x$ and $-3\ln y$. These are
standard properties of minimal finite exponential families, in which
the pairs of socket counts do not lie on a line
\cite[Secs.~3.5--3.6]{WainwrightJordan08};
the following paragraph checks their hypotheses for this support.

To justify the finite minimizer, put $\theta=\ln x$ and $\zeta=\ln y$,
and let $\mathcal S$ be the exponent-pair support of $F$.
The interior mean pair lies strictly below the maximum support
projection in every unit direction. Compactness of the unit circle
gives a uniform positive gap $c=c(a,b)>0$. Since every nonzero
coefficient of $F$ is at least one,
\[
 \ln F(e^\theta,e^\zeta)-3a\theta-3b\zeta
 \ge\max_{(i,j)\in\mathcal S}
       \{(i-3a)\theta+(j-3b)\zeta\}
 \ge c\sqrt{\theta^2+\zeta^2}.
\]
Thus the objective tends to infinity as $(\theta,\zeta)$ escapes to
infinity, even when the direction varies. Its Hessian, the matrix of second partial derivatives, is the covariance
of the two socket counts under this weighted distribution. The support
spans two dimensions, so that covariance is positive definite.
At the minimizer the derivatives in $\theta,\zeta$ vanish; hence
differentiating the optimum with respect to $a,b$ leaves only the
explicit terms $-3\theta$ and $-3\zeta$.

Write
\[
 r=\frac{1-x}{1+x},\qquad s=\frac{1-y}{1+y},\qquad t=r^3s^3,
 \qquad Q=1-2a,\quad S=1-2b.
\]
The two mean equations give
\[
 Q=\frac{r+r^2s^3}{1+t},\qquad
 S=\frac{s+r^3s^2}{1+t}.
\]
Initially $r,s\in(-1,1)$. The factors $1+rs^3$ and $1+r^3s$ are
positive, so $Q,S>0$ imply $r,s>0$. In particular,
\[
 \operatorname{atanh}Q=\operatorname{atanh}r
                       +\operatorname{atanh}(t/r),\qquad
 \operatorname{atanh}S=\operatorname{atanh}s
                       +\operatorname{atanh}(t/s).
\]
Define
\[
 \delta(u)=\frac{L-h((1-u)/2)}{L},\qquad
 \gamma(u)=\frac{\delta(u)}{u\operatorname{atanh}u},\qquad 0<u<1.
\]
The function $\gamma$ is strictly decreasing by the coefficient-ratio
criterion for power-series quotients~\cite[Lemma~1]{BiernackiKrzyz55}.
Indeed,
\[
 L\delta(u)=\sum_{k\ge1}\frac{u^{2k}}{2k(2k-1)},\qquad
 u\operatorname{atanh}u=\sum_{k\ge1}\frac{u^{2k}}{2k-1},
\]
and the ratios of corresponding coefficients are $1/(2kL)$.
For completeness, the quotient of two positive-coefficient power
series with strictly decreasing coefficient ratios is strictly
decreasing: on differentiating the quotient and pairing terms with
indices $i<j$, each coefficient is a positive factor times
$(a_jb_i-a_ib_j)<0$. Absolute convergence on compact subintervals
justifies these operations.

Differentiating $\Delta$ and using the mean equations yields
\begin{align*}
 \tfrac12\Delta_a
   &=\delta(S)\operatorname{atanh}Q-3\operatorname{atanh}(t/r),\\
 \tfrac12\Delta_b
   &=\delta(Q)\operatorname{atanh}S-3\operatorname{atanh}(t/s).
\end{align*}
Suppose both derivatives vanish and $r>s$. Then
\[
 Q-S=\frac{(r-s)(1-r^2s^2)}{1+t}>0,
 \qquad \frac SQ>\frac sr.
\]
The ratio of the two stationarity equations would give
\[
 \frac SQ\frac{\gamma(S)}{\gamma(Q)}
   =\frac{\operatorname{atanh}(t/r)}
           {\operatorname{atanh}(t/s)}.
\]
The left side is greater than $S/Q>s/r$, whereas the right side is
less than $s/r$, because $\operatorname{atanh}u/u$ is strictly
increasing. This is a contradiction. Interchanging $r$ and $s$
excludes $r<s$. Therefore every interior stationary point satisfies
$r=s$, hence $a=b$. In view of the boundary values, it remains only
to prove $\Delta(a,a)\le0$ for $0\le a\le1/2$.

\paragraph{Rational logarithm bounds.}
The constants used below follow from
\[
 S_6(t)=2\sum_{j=0}^{5}\frac{t^{2j+1}}{2j+1},\qquad
 S_6(t)\le\ln\frac{1+t}{1-t}
 \le S_6(t)+\frac{2t^{13}}{13(1-t^2)}\quad(0<t<1).
\]
Substitution of $t=1/3,1/9,1/5$, respectively, and the identities
$\ln5=2\ln2+\ln(5/4)$ and $\ln6=2\ln2+\ln(3/2)$ give
\begin{equation}\label{eq:analytic-log-bounds}
 \frac{69}{100}<L<\frac{347}{500},\qquad
 \ln5<\frac{10059}{6250}<\frac{161}{100},\qquad
 \ln6>\frac{7167}{4000}.
\end{equation}
These are finite rational inequalities.

\paragraph{The diagonal for $0\le a\le1/6$.}
Substitute $x=y=\sqrt{z}$ with $z=a/[5(1-a)]$.

We need an upper bound on the infimum, so any positive values may be
substituted. The factor five comes from the diagonal stationarity
equation $3a=z\,\partial_z\ln F(\sqrt z,\sqrt z)=15z+O(z^2)$,
whose leading relation is $a=5z+O(z^2)$. The substituted values need not satisfy
that equation exactly. The chosen denominator permits the
remaining correction to be written as the cubic $Q_0$ below.
Since
\[
 F(\sqrt z,\sqrt z)=1+15z+15z^2+z^3,
\]
direct substitution gives
\[
 \Phi(a,a)\le3h(a)+3a\ln5+\ln Q_0(a),\qquad
 Q_0(a)=1-\frac{12}{5}a^2+\frac{176}{125}a^3.
\]
Thus $\Delta(a,a)\le-f(a)$, where
\[
 f(a)=h(a)+h(a)^2/L-3a\ln5-\ln Q_0(a).
\]
We prove that $f$ is concave on $[0,1/6]$ and is nonnegative at its
endpoints.

First put $g(a)=h(a)+h(a)^2/L$, $v=a(1-a)$, and $z=(1-2a)^2$.
For the curvature calculation write $A_0=\operatorname{atanh}(1-2a)$
and $D_0=L-h(a)$. Since $h'(a)=2A_0$ and $h''(a)=-1/v$,
\[
 g''(a)=-\frac{L+2h(a)}{Lv}+\frac{8A_0^2}{L}.
\]
\[
 -Lv\{g''(a)+5\}=\frac{7L}{4}+\frac{5L}{4}z-2D_0-2(1-z)A_0^2.
\]
The entropy-deficit series used above gives $D_0$. The Cauchy product
of the $\operatorname{atanh}$ series gives
\[
 A_0^2=\sum_{k\ge1}\frac{z^k}{k}
                 \sum_{j=1}^{k}\frac1{2j-1}.
\]
Collecting coefficients yields, for $0<a<1/2$,
\begin{equation}\label{eq:analytic-curvature-series}
 -Lv\{g''(a)+5\}
 =\frac{7L}{4}+\left(\frac{5L}{4}-3\right)z
       +\sum_{k\ge2}c_kz^k,
\end{equation}
where
\[
 c_k=\frac{2}{k(k-1)}\sum_{j=1}^{k-1}\frac1{2j-1}
                      -\frac3{k(2k-1)}>0.
\]
Positivity follows already by replacing the sum by one. Using
$L>69/100$ and rounding the coefficients $c_2,\ldots,c_8$ downward
gives the polynomial lower bound
\begin{align*}
 P_8(z)={}&\frac{483}{400}-\frac{171}{80}z+\frac12z^2
             +\frac{244}{1000}z^3+\frac{148}{1000}z^4\\
          &+\frac1{10}z^5+\frac{73}{1000}z^6
             +\frac{56}{1000}z^7+\frac{44}{1000}z^8.
\end{align*}
This polynomial is positive on $[0,1]$. We use the standard range
bound from Bernstein coefficients~\cite[Sec.~3]{CargoShisha66}.
The explicit identity needed here is
\[
 P_8(z)=\sum_{i=0}^8 b_i\binom8i z^i(1-z)^{8-i},
\]
where the numerators of $b_i$, with common denominator $1120000$,
are
\begin{align*}
 (1120000b_0,\ldots,1120000b_4)
   &=(1352400,1053150,773900,519530,297288),\\
 (1120000b_5,\ldots,1120000b_8)
   &=(118790,4940,1610,263200).
\end{align*}
All are positive. The basis weights $\binom8i z^i(1-z)^{8-i}$ are
nonnegative on $[0,1]$ and sum to one by the binomial theorem. Thus
$P_8$ is a weighted average of the positive $b_i$, including at the
endpoints. This proves positivity throughout the interval.
Equation~\eqref{eq:analytic-curvature-series}
therefore implies $g''(a)\le-5$.

Next $Q_0(a)\ge14/15>0$ on $[0,1/6]$, and
\begin{align*}
 5Q_0^2+Q_0''Q_0-(Q_0')^2
  ={}&\frac15+a\left(\frac{1056}{125}-\frac{888}{25}a\right)\\
     &+a^3\left(\frac{17248}{625}-\frac{4224}{125}a^2\right)
       +\frac{357072}{15625}a^4+\frac{30976}{3125}a^6.
\end{align*}
The two parentheses are at least $316/125$ and $49984/1875$,
respectively. Thus $(\ln Q_0)''\ge-5$, and $f''\le0$ follows.

We have $f(0)=0$. At $a=1/6$, write $h_0=9011/20000$ and
$d=203/3375$. The logarithm bounds in
\eqref{eq:analytic-log-bounds} give
\[
 h(1/6)=\ln6-\tfrac56\ln5>h_0,\qquad Q_0(1/6)=1-d.
\]
Since $-\ln(1-d)\ge\sum_{k=1}^4 d^k/k$,
\begin{align*}
 f(1/6)&\ge h_0+\frac{h_0^2}{347/500}-\frac{161}{200}
                                      +\sum_{k=1}^4\frac{d^k}{k}\\
 &=\frac{959977053021773}{11525626687500000000}
   >\frac1{12500}>0.
\end{align*}
Concavity now proves $f(a)\ge0$ throughout $[0,1/6]$.

A concave function lies above the chord joining its endpoint values.
The term $-3a\ln5$ is affine and has no effect on curvature, so the
separate curvature bounds for $g$ and $\ln Q_0$ are sufficient for
this argument. Continuity at zero extends the conclusion to the
closed interval; no derivative at zero is required.

\paragraph{The diagonal for $1/6\le a\le1/2$.}
Set $q=1-2a$, $z=q^2\in[0,4/9]$, and $D=L-h(a)$. The first four
positive terms of the entropy-deficit series give
\[
 D\ge zP_c(z),\qquad
 P_c(z)=\frac12+\frac z{12}+\frac{z^2}{30}+\frac{z^3}{56}.
\]
Also, the alternating logarithm bound gives
\[
 \ln(1+q^6)\le z^2Q_c(z),\qquad
 Q_c(z)=z-\frac{z^4}{2}+\frac{z^7}{3}.
\]
Define $A_c(z)=P_c(z)^2-(347/500)Q_c(z)$. On $[0,4/9]$,
\[
 P_c(z)<\frac{11}{20},\qquad P_c'(z)<\frac18,
 \qquad Q_c'(z)\ge\frac{601}{729}.
\]
Consequently
\[
 A_c'(z)<\frac{11}{80}-\frac12\frac{601}{729}<0,
 \qquad
 A_c(4/9)=\frac{107832863}{70309644300}>0.
\]
Since $Q_c(z)\ge0$ and $L<347/500$, the diagonal case of
\eqref{eq:independent} yields
\[
 \Delta(a,a)\le-\frac{D^2}{L}+\ln(1+q^6)
       \le-\frac{z^2}{L}A_c(z)\le0.
\]
Both diagonal intervals are now covered. If $\Delta$ had a positive
maximum on $[0,1/2]^2$, that maximum would be an interior stationary
point and hence lie on the diagonal, a contradiction. Together with
the initial symmetry reduction, this proves
Lemma~\ref{lem:universal} on the full domain.

\section{Proof of the minimum-distance estimate for the transmitted code}\label{app:distance}
\begin{proof}
We must exclude small transmitted weight even when the punctured
part is large. The proof therefore separates small total weight
$|u|+|v|$ from larger total weight with small $|v|$. The first region
is controlled by socket containment; the second by the entropy
inequality. Including $v=0$ also excludes nonzero invisible pairs
$(u,0)$, which is exactly the injectivity requirement for $A$.

Let $Z_{a,t}(G)$ count the pairs $(u,v)\in\ker[A\ B]$ with
$|u|=a$ and $|v|=t$. The multi-edge coefficient enumeration
\cite{kasaiMET}, applied to independent uniform socket
matchings, gives the exact finite-length identity
\begin{equation}\label{eq:full-weight-enumerator}
 \E_G Z_{a,t}(G)
 =\binom Ka\binom Nt
   \frac{[x^{\ell a}y^{3t}]F(x,y)^N}
        {\binom{3N}{\ell a}\binom{3N}{3t}}.
\end{equation}
All binomial arguments here are integers. The polynomial $F$ records
even total active socket count at each row, including configurations
with parallel edges counted modulo two.

\paragraph{Small extended weights.}
Fix nonzero supports of respective sizes $a,t$, put $d=a+t$, and let
$s=\ell a+3t$ be their total number of active sockets. If the supports
give a kernel word, every occupied row has a positive even number of
active sockets. In particular, $s$ is even. If $s$ is odd the event is
impossible; otherwise write $m=s/2$. For $1\le d\le\delta_eN$,
\[
 \frac{3d}{2}\le m\le\frac{\ell d}{2}<N.
\]
All active sockets of a kernel word must lie in at most $m$ rows.
For any specified set of $m$ rows, independence of the two matchings
and uniformity within each edge type give the containment probability
\[
 \frac{\binom{3m}{\ell a}}{\binom{3N}{\ell a}}
 \frac{\binom{3m}{3t}}{\binom{3N}{3t}}
 \le \left(\frac mN\right)^{\ell a+3t}.
\]
Here a binomial coefficient is zero if its lower argument exceeds its
upper argument. Each nonzero ratio is bounded by the indicated power
by writing it as a product of sampling-without-replacement factors.
Taking the union over sets of $m$ rows therefore yields
\begin{equation}\label{eq:socket-union-bound}
 \Pr_G\{Au+Bv=0\}
 \le \binom Nm\left(\frac mN\right)^{2m}
 \le \left(\frac{em}{N}\right)^m
 \le \left(\frac{e\ell d}{2N}\right)^{3d/2}.
\end{equation}
The last step uses $e\ell d/(2N)\le e\ell\delta_e/2<1$.
There are $\binom{K+N}{d}$ support pairs of total weight $d$.
Consequently, with
\[
 C_\ell=2e\left(\frac{e\ell}{2}\right)^{3/2},\qquad
 b_d=\left[C_\ell\sqrt{\frac dN}\right]^d,
\]
Vandermonde's identity~\cite[Sec.~5.1, Eq.~(5.27)]{GrahamKnuthPatashnik94}
and $K+N\le2N$ give
\begin{equation}\label{eq:small-weight-count}
 \E_G\sum_{a+t=d}Z_{a,t}(G)
 \le \left(\frac{2eN}{d}\right)^d
       \left(\frac{e\ell d}{2N}\right)^{3d/2}
 =b_d.
\end{equation}

For $d\ge2$ with $d+1\le\delta_eN$,
\[
 \frac{b_{d+1}}{b_d}
 =C_\ell\sqrt{\frac{d+1}{N}}
       \left(1+\frac1d\right)^{d/2}
 \le C_\ell\sqrt{e\delta_e}<\frac14,
\]
because
\[
 C_\ell^2e\delta_e=\frac{e^6}{20000}
 <\frac{729}{20000}<\frac1{16}.
\]
Also $b_2=e^5\ell^3/N<243\ell^3/N$. Thus the contribution of
$2\le d\le\delta_eN$ is at most $324\ell^3/N$; the bound remains
valid when this range is empty.

For $d=1$, a transmitted support alone has three active sockets and
cannot be a kernel word. A punctured support alone is also impossible
when $\ell$ is odd. If $\ell$ is even and $1\le\delta_eN$, then
\eqref{eq:socket-union-bound} with $m=\ell/2$ gives a total expectation
at most
\[
 K\left(\frac{e\ell}{2N}\right)^{\ell/2}
 \le \frac{3e^2\ell}{4N}<\frac{7\ell}{N}.
\]
The middle inequality uses $\ell/2\ge2$ and
$e\ell/(2N)<1$. If $1>\delta_eN$, no weight-one term is included.
Markov's inequality~\cite[Thm.~1.6.4]{Durrett19}, followed by
$7\ell\le\ell^3$, now gives
\begin{equation}\label{eq:small-word-probability}
 \Pr_G\{\exists (u,v)\ne0:\ Au+Bv=0,\ |u|+|v|\le\delta_eN\}
 \le\frac{325\ell^3}{N}.
\end{equation}
This proof uses the parity of $\ell a+3t$, and does not require $t$
itself to be even.

\paragraph{Small transmitted weights outside that region.}
For a nonzero coefficient in \eqref{eq:full-weight-enumerator}, put
$\alpha=a/K$ and $\beta=t/N$. The positive-coefficient bound
\[
 [x^{\ell a}y^{3t}]F^N
 \le \inf_{x,y>0}F(x,y)^Nx^{-\ell a}y^{-3t}
 =\exp\{N\Phi(\alpha,\beta)\}
\]
uses $\ell a=3N\alpha$. Applying the binary type-class bounds
\cite[Ex.~11.1.3, Eq.~(11.40)]{CoverThomas06}
\[
 \frac{e^{n h(j/n)}}{n+1}\le\binom nj\le e^{n h(j/n)}
\]
to \eqref{eq:full-weight-enumerator} yields the finite bound
\begin{equation}\label{eq:distance-exponent-bound}
 \E_G Z_{a,t}(G)
 \le(3N+1)^2\exp\{N f_R(\alpha,\beta)\},
\end{equation}
where
\[
 f_R(\alpha,\beta)
 =(R-3)h(\alpha)-2h(\beta)+\Phi(\alpha,\beta).
\]
The coefficient estimate holds at finite length, even when the infimum is not attained.
Lemma~\ref{lem:universal} implies
\begin{equation}\label{eq:distance-exponent-envelope}
 f_R(\alpha,\beta)
 \le [R-1+h_2(\beta)]h(\alpha).
\end{equation}

Suppose $a+t>\delta_eN$ and $t\le\delta_vN$. Then
$R\alpha+\beta>\delta_e$, so $\alpha\ge\eta$.
Every monomial $x^iy^j$ of $F$ has $i-j\le2$. Hence nonzero
coefficients of $F^N$ satisfy
\[
 3N(\alpha-\beta)\le2N,
 \qquad \alpha\le\frac23+\delta_v\le\frac34.
\]
This support restriction prevents $\alpha$ from approaching one.
Moreover, $\delta_v<1/1024$, and the elementary inequality
\[
 h_2(x)\le x\log_2(1/x)+\frac{x}{\ln2}
\]
gives
\[
 h_2(\delta_v)<h_2(1/1024)
 \le\frac{10+1/\ln2}{1024}
 <\frac{12}{1024}<\frac18\le\frac{1-R}{2}.
\]
Here $\ln2>1/2$ and $R\le3/4$ were used. Binary entropy is increasing
on $[0,1/2]$, so \eqref{eq:distance-exponent-envelope} and concavity
of $h$ on $[\eta,3/4]$ imply
\[
 f_R(\alpha,\beta)
 \le-\frac{1-R}{2}h(\alpha)\le-c_\ell.
\]
Zero coefficients contribute nothing. Summing
\eqref{eq:distance-exponent-bound} over at most $(K+1)(N+1)$
pairs of weights therefore bounds the probability of any such word by
$Q_Ne^{-c_\ell N}$.

Let $\mathcal G_N$ be the event that no nonzero extended word has
transmitted weight at most $\delta_vN$. The preceding estimate and
\eqref{eq:small-word-probability} prove
\eqref{eq:good-graph-bound}. On this event, there is no nonzero word
$(u,0)$, so $A$ is injective. Every nonzero $v\in C(G)$ occurs in an extended codeword
$(u,v)$, and therefore has weight greater than $\delta_vN$. This
proves the lemma.
\end{proof}

\section*{Declaration of generative AI use}
This work was developed with substantial assistance from OpenAI Codex,
a generative AI system. Codex was used to explore research ideas and
proof strategies; formulate, develop, and examine mathematical arguments;
assist with literature searches and computational checks; and draft,
revise, and translate the manuscript. The author takes full responsibility
for the mathematical claims, proofs, references, and final text.

\begingroup
\small
\bibliographystyle{IEEEtran}
\bibliography{references}
\endgroup

\end{document}